\begin{document}
\title{New Competitive Semi-online Scheduling Algorithms for Small Number of Identical Machines\thanks{Supported by Veer Surendra Sai University of Technology, Burla, Odisha, 768018 INDIA.}}
%
%
\author{Debasis Dwibedy\and
Rakesh Mohanty}
%
%
\institute{Veer Surendra Sai University of Technology, Burla, Odisha 768018, INDIA 
\email{(debasis.dwibedy, rakesh.iitmphd)}@gmail.com}
%
\maketitle              
\begin{abstract}
Design and analysis of constant competitive deterministic semi-online algorithms for the multi-processor scheduling problem with small number of identical machines have gained significant research interest in the last two decades. In the semi-online scheduling problem for makespan minimization, we are given a sequence of independent jobs one by one in order and upon arrival, each job must be allocated to a machine with prior knowledge of some \textit{Extra Piece of Information (EPI)} about the future jobs. Researchers have designed multiple variants of semi-online scheduling algorithms with constant competitive ratios by considering one or more \textit{EPI}. In this paper, we propose four new variants of competitive deterministic semi-online algorithms for smaller number of identical machines by considering two \textit{EPI} such as $Decr$ and $Sum$. We obtain improved upper bound and lower bound results on the competitive ratio for our proposed algorithms, which are comparable to the best known results in the literature. In two identical machines setting with known $Sum$, we show a tight bound of $1.33$ on the competitive ratio  by considering a sequence of equal size jobs. In the same setting we achieve a lower bound of $1.04$ and an upper bound of $1.16$ by considering $Sum$ and a sequence of jobs arriving in order of decreasing sizes. For three identical machines setting with known $Decr$ and $Sum$, we show a lower bound of $1.11$ on the competitive ratio. In this setting, we obtain an upper bound of $1.5$ for scheduling a sequence of equal size jobs and achieves an upper bound of $1.2$ by considering a sequence of decreasing size jobs. Further we develop an improved competitive algorithm with an  upper bound of $1.11$ on the competitive ratio. 
\keywords{Competitive Analysis  \and Deterministic Semi-online Algorithms \and Identical Machines \and Makespan \and Non-preemptive  \and Semi-online Scheduling.}
\end{abstract}
\section{Introduction} 
The classical scheduling problem requires irrevocable scheduling of a list of $n$ jobs on a set of $m (\geq 2)$ identical parallel machines with an objective to minimize the completion time of all jobs i.e., makespan. This problem with $m=2$ was proved to be NP-complete in [1] by a polynomial time reduction from the well-known partition problem. The above hardness result was achieved by using the concept of splitting a weighted set of items into two subsets of equal weights.\\\\
\textit{Offline Scheduling.} When the list of jobs to be scheduled is given at the outset, the problem is called \textit{offline}. The algorithm designed for the offline $m$ machines scheduling problem is known as an \textit{offline algorithm}. Out of all offline algorithms, the one that achieves the smallest makespan is called the \textit{optimal offline algorithm $OPT$}. \textit{Longest Processing Time (LPT)} [2] is one of the primitive offline algorithms for the $m$ machine offline scheduling problem. Algorithm \textit{LPT} first sorts a given list of jobs in order of non-increasing sizes, then follows the order to schedule the jobs one by one on the machine that has the current minimum load. In worst case, algorithm \textit{LPT} achieves a makespan which is $1.16$ times of the optimum makespan.\\\\
\textit{Online and Semi-online Scheduling.} In practice jobs are not known at the outset, rather they are revealed one by one in order. A job $J_i$ must be scheduled on a machine $M_j$ as soon as it is received with no clue on the successive jobs $J_{i+1}$, where $1\leq i<n$.  Such a scheduling problem is known as \textit{online scheduling}. A recent survey on the variants of the online scheduling problem can be found in [3]. In online scheduling, when some \textit{Extra Piece of Information (EPI)} about future jobs is given a priori, the problem becomes \textit{semi-online} [4-5].  \\
In this article, we are interested in semi-online scheduling of a sequence of $n$ independent jobs on $m$ identical parallel machines settings, where we know jobs arrive in order of non-increasing sizes and the value of the total sum of their sizes. We are constrained to schedule each job irrevocably and non-preemptively. Our objective is to minimize the makespan.\\\\
\textit{Practical Significance.} Semi-online scheduling problem often arises in many practical applications such as resource management in production systems, interactive parallel computations, robot navigation, distributed data processing and clients requests management by the network servers. In a client-server model, let us consider a video download application, which allows a client to download videos from the internet. When a client places a request to download a particular video file, the application knows a priori the total size of the requested file, associated data packets and their order of delivery. These information, if given to the network routers, would certainly help to optimize the congestion in the networks and to minimize the delivery time at the destination. Efficient semi-online scheduling algorithms can be extremely useful in minimizing the bandwidth consumption at any link of a network, while routing the data packets across several networks. In certain applications, where a limited number of resources are used, the design and analysis of efficient semi-online algorithms are of great theoretical and research  significance.\\\\
\textit{Performance Measure.} The efficiency of a semi-online algorithm $ALG$ is evaluated by competitive analysis method [6]. In this method a competitive ratio is computed based on the cost obtained by the semi-online algorithm $ALG$ and the cost obtained by its optimal offline counterpart. Let us consider an input instance $\sigma=\langle J_1, J_2, \ldots, J_n \rangle$, consisting of $n$ items of a cost minimization problem $X$. Let $C_{ALG}$ and $C_{OPT}$ be the costs incurred by $ALG$ and $OPT$ respectively. The competitive ratio can be defined as the smallest $b (\geq 1)$ for which $\frac{C_{ALG}}{C_{OPT}}\leq b$ for any permutations in the arrival order of the items in $\sigma$.\\\\
\textit{Research Motivation.}  The ongoing research in semi-online scheduling has been predominantly influenced by two factors. The first one accounts for the improvement in the best known competitive bounds and the second one inclines towards the exploration of practically significant new \textit{EPI}. The \textit{EPI} is completely application dependent. The \textit{EPI} that helps in achieving the current best bound for a particular setting might not be available in all real world online scheduling applications. Therefore analysis of semi-online algorithms based on unexplored \textit{EPI} or using a combination of multiple \textit{EPI} to achieve better competitive analysis results is a major research challenge. According to our knowledge, a maiden work [7] has been reported till date in the literature on semi-online scheduling with known $Decr$ and $Sum$. The existing literature lacks an extensive study on semi-online scheduling of equal size jobs on two and three identical machines settings. In our study we propose four new variants of  constant competitive deterministic semi-online scheduling algorithms for smaller number of machines with two \textit{EPI}.\\\\
\textit{Our Contribution.} We investigate the semi-online scheduling problem with known $Decr$ and $Sum$.  We study the non-preemptive variants for makespan minimization in two and three identical parallel machines settings. Two natural input job sequence patterns are explored based on the information on $Decr$. The first pattern is equivalent to a sequence of equal size jobs and the second one resembles to a sequence of decreasing size jobs. We prove the lower bound results for the studied settings by characterizing the input job sequences. For two identical machines setting, a lower bound of $1.33$ on the competitive ratio is shown by considering both input patterns, and a lower bound $1.04$ is proved by considering only the second input pattern. We propose a deterministic semi-online algorithm named \textit{2DS}, which achieves an upper bound of $1.33$ on the competitive ratio for both input patterns by maintaining a near equal load on each machine. We improve the upper bound to $1.16$ by considering only the second input pattern and proposing a deterministic semi-online algorithm named,  \textit{Improved 2DS (I2DS)}. Algorithm \textit{I2DS} always keeps a machine with heavy load as compare to the other one. The imbalance in loads between the machines leads to an improved competitive bound. However, algorithm \textit{2DS} remains best possible for the first input pattern as the load imbalance principle does not help to improve the upper bound. For three identical machines setting with known $Decr$ and $Sum$, we show a lower bound of $1.11$ on the competitive ratio, which holds for both input patterns. In this setting, we propose a semi-online algorithm named \textit{3DS}, which achieves an upper bound $1.5$ in scheduling equal size jobs. Furthermore, we prove that algorithm \textit{3DS} is $1.2$ competitive with known $Sum$ and $Decr$, where jobs arrive in order of decreasing sizes. Finally we improve the upper bound $1.2$ on the competitive ratio to $1.11$ by considering only the second input pattern and proposing the algorithm \textit{Improved 3DS (I3DS)}. Our proposed algorithms are efficient as they hold the defined bounds for all input instances of the studied settings.\\\\
\textit{Organization.} The remaining sections of this paper is organized as follows. Section 2 discusses some basic terminologies and seminal contributions related to our study. Section 3 presents our proposed algorithms and associated competitive analysis results for non-preemptive semi-online scheduling for makespan minimization in two and three identical parallel machines settings with \textit{EPI} $Decr$ and $Sum$. Finally section 4 provides the concluding remarks and highlights the future scope of our work.  
\section{Preliminaries and State-of-the-art Results}
We present the basic terminologies, notations and definitions related to our work in Table \ref{tab:Basic Terms Notations and Definition}.
\begin{table}[!htbp]
\centering
\caption[centre]{Basic Terminologies Notations and Definitions}
\begin{tabular} {ccp{5.9cm}}
\hline
\textbf{Terminology} & \textbf{Notations} &\textbf{Definitions} \\
\hline
Job  & $J_i$ &  An executable unit, where $1\leq i\leq n$. \\
Machine & $M_j$ & A processing unit, where $1\leq j\leq m$.\\
Processing Time or Size  &  $p_{i}$ & Time taken by a job $J_i$ to execute on a machine $M_j$.\\
Completion Time & $c_i$ & The time at which a job $J_i$  finishes its execution\\
Load & $l_j$ & Sum of  sizes of the jobs that have been assigned to a machine $M_j$. \\
Makespan & $C_{max}$ & $\max\{l_j|1\leq j\leq m\}$. \\
Total sum of sizes of all jobs & Sum & $\sum_{i=1}^{n}{p_i}$.\\
Largest size & Max or $p_{max}$ & $\max \{p_i| 1\leq i\leq n\}$.\\
Non-increasing job sizes & Decr & $p_{i+1}\leq p_i$, for $i\geq 1$.\\
Tightly grouped processing time & TGRP & $p_i\in [a, b]$, where $a, b> 0$.\\
Optimum Makespan & $C_{OPT}$ & $\max\{\frac{1}{m}\cdot \sum_{i=1}^{n}{p_i},\hspace*{0.2cm}  p_{max}\}$.\\
\hline
\end{tabular}
\label{tab:Basic Terms Notations and Definition}
\end{table} 
\subsection{State-of-the-art Results} 
In a semi-online variant of the online scheduling problem, besides the current job some \textit{EPI} on the future jobs is also given. The consideration of different \textit{EPI} leads to different semi-online variants. The ongoing research in semi-online scheduling focuses on the improvement in the existing best competitive bounds for each of the variants. We particularly concentrate on makespan minimization in two and three identical parallel machines settings. We present an overview of the important competitive analysis results as follows.\\
\textbf{Two Identical Machines.} Kellerer et al. [4] studied a semi-online version of the well-known partition problem, which is equivalent to semi-online scheduling on two identical machines with known $Sum$. By considering $Sum=2$, they proposed a $1.33$ competitive semi-online algorithm. The algorithm schedules a sequence of jobs in such a manner that the maximum load incurred on a machine is at most $\frac{2}{3}\cdot Sum$. 
Seiden et al. [5] claimed that algorithm \textit{LPT} with an upper bound of $1.16$ on the competitive ratio is best possible for semi-online scheduling with known $Decr$.\\
Many researchers investigated the semi-online scheduling problem by considering multiple \textit{EPI} and significantly improved the existing best competitive bounds.  Angelelli [8] introduced a pair-wise combination of two \textit{EPI} such as $Sum$ and lower bound on job's processing time. A tight bound of $1.33$ on the competitive ratio was achieved. Tan and He [7] considered $Sum$ and $Max$ as the known \textit{EPI} and obtained a tight bound of $1.2$ on the competitive ratio.\\
They also considered $Sum$ and $Decr$ as the known \textit{EPI} to achieve a lower bound of $1.11$ on the competitive ratio. To prove their claim they considered two specific instances $\sigma_1$, $\sigma_2$ with six jobs in each instance and $Sum=18$, where $\sigma_1$ considers $p_1=p_2=4$, $p_3=p_4=p_5=p_6=2.5$ and $\sigma_2$ considers $p_1=p_2=4$, $p_3=p_4=p_5=3$ and $p_6=1$. Although the  achieved bound holds for the considered instances, it does not help to trace out all instances those account for the  lower bound result.  A generalized lower bound of the problem as a factor of the known $Sum$ is required to capture all such instances. They proposed the algorithm \textit{SD} and achieved a matching upper bound on the competitive ratio by considering $Sum=18$. Algorithm \textit{SD} schedules the first job $J_1$ on machine $M_1$. The second job $J_2$ is assigned to $M_1$ and the remaining jobs are scheduled on $M_2$ if $l_1+p_2\leq \frac{5}{9}\cdot Sum$. If $\frac{7}{18}\cdot Sum\leq l_1+p_2< \frac{4}{9}\cdot Sum$ or $l_1+p_2> \frac{5}{9}\cdot Sum$, then $J_2$ is scheduled on $M_2$ and the remaining jobs $J_i$, where $i\geq 3$ are scheduled based on the following two stopping criteria. Criterion 1: If $l_j+p_i\leq \frac{5}{9}\cdot Sum$, then $J_i$ is assigned to such a $M_j$ and the remaining jobs are scheduled on the other machine, where $j\in \{1, 2\}$. Criterion 2: If $l_j+p_i> \frac{5}{9}\cdot Sum$, $\forall j$, then $J_i$ is scheduled on the $M_j$ for which the current load $l_j=\min\{l_1, l_2\}$ and the remaining jobs are assigned to the other machine. At any moment if the loads $l_1=l_2$, then $J_i$ is assigned to $M_2$. Although algorithm \textit{SD} improves the existing competitive bound, it is does not cover all instances of the problem. For example, let us consider a sequence of three equal size jobs with $p_1=p_2=p_3=1$ and another sequence of three decreasing size jobs with $p_1=5$, $p_2=4$ and $p_3=3$. For the first and the second sequence any deterministic semi-online algorithm has a competitive ratio of at least $1.33$ and $1.16$ respectively. Therefore the semi-online scheduling problem on two identical machines with known $Decr$ and $Sum$ requires further investigations in the design of optimal semi-online algorithms for $n\geq 3$.\\ Angelelli [9] improved the $1.33$ tight bound to $1.2$ by considering $Sum$ and upper bound ($ub$) on job's processing time, where $ub\in (0.5, 0.6)$. For $ub\in (0.75, 1)$, Angelelli achieved a tight bound of $1+\frac{ub}{3}$ on the competitive ratio. In [10], Angelelli considered $Sum$ and various intervals of the upper bound on job's processing time as the known \textit{EPI}. For $ub\in [\frac{1}{b}, \frac{2(b+1)}{b(2b+1)}]$ a tight bound $1+\frac{1}{2b+1}$ and for $ub\in (\frac{2b-1}{2b(b-1)}, \frac{1}{b-1}]$ a tight bound $(\frac{b-1}{3})ub+0.66(\frac{b+1}{b})$ were achieved on the competitive ratio, where $b\geq 2$. Cao et al. [11] obtained a tight bound of $1.2$ by considering $Max$ and the value of the optimum makespan as the known \textit{EPI}.
Cao and Wan [12] considered the known information on $Decr$ and $TGRP(1, r)$ and proved a tight bound of $1.16$ on the competitive ratio. We now present a summary of the best known competitive bounds  in Table \ref{tab:Best Known Competitive Bounds for Two Identical Machines}.   
\begin{table}[!htbp]
\centering
\caption[centre]{Best Known Competitive Bounds for Two Identical Machines}
\begin{tabular} {ccc}
\hline
\textbf{EPI} & \textbf{Lower Bound} &\textbf{Upper Bound} \\
\hline
$Sum$  & $1.33$ &  $1.33$ \\
$Decr$ & $1.16$ & $1.16$\\
$Sum, TGRP(lb)$  &  $1.33$ &  $1.33$\\
$Sum, Max$ & $1.2$ & $1.2$ \\
$Sum, Decr$ & $1.11$ & $1.11$ \\
$Sum, TGRP(ub)$ & $1.2$ & $1.2$ \\
$Max, Opt$ & $1.2$ & $1.2$\\
$Decr, TGRP(1, r)$ & $1.16$ & $1.16$\\
\hline
\end{tabular}
\label{tab:Best Known Competitive Bounds for Two Identical Machines}
\end{table}\\
\textbf{Three Identical Machines.} He and Dosa [13] initiated a study on semi-online scheduling in three identical machines settings. They considered $TGRP(r)$ as the known \textit{EPI} and achieved an upper bound $1.5$ on the competitive ratio for $r\in (2, 2.5)$. Furthermore, they obtained an upper bound $\frac{4r+2}{2r+3}$ for $r\in (2.5, 3)$. Angelelli [14] achieved a lower bound $1.392$ and an upper bound $1.421$ on the competitive ratio by considering known $Sum$. \\
Hua et al. [15] introduced multiple \textit{EPI} in semi-online scheduling on three identical machines settings. They considered $Sum$ and $Max$ as the known \textit{EPI} and obtained a lower bound $1.33$ and an upper bound $1.4$ on the competitive ratio. For this setting, Wu et al. [16] improved the lower bound of Hua e al. [15] to achieve a tight bound $1.33$ on the competitive ratio.\\
By considering only known $Decr$, Cheng et al. [17] achieved a tight bound $1.18$, which is the best competitive bound known till date in semi-online scheduling on three identical machines settings.   We present a summary of the best known competitive bounds  in Table \ref{tab:Best Known Competitive Bounds for Three Identical Machines}.
\begin{table}[h]
\centering
\caption[centre]{Best Known Competitive Bounds for Three Identical Machines}
\begin{tabular} {ccc}
\hline
\textbf{EPI} & \textbf{Lower Bound} &\textbf{Upper Bound} \\
\hline
$TGRP(r)$ & --- & $1.5$\\ 
$Sum$  & $1.392$ &  $1.421$ \\
$Sum, Max$ & $1.33$ & $1.33$\\
$Decr$  &  $1.18$ &  $1.18$\\
\hline
\end{tabular}
\label{tab:Best Known Competitive Bounds for Three Identical Machines}
\end{table} \\
The current state-of-the-art results on the competitive ratio  for other variants of the semi-online scheduling problem can be found  in a recent survey of Epstein [18].
\section{Our Results on Semi-online Scheduling with Known Decr and Sum}
In the semi-online scheduling problem  with $Decr$ and $Sum$,  we know that $p_{i}\geq p_{i+1}$ for $1\leq i < n$ and we are given the value of $\sum_{i=1}^{n}{p_i}$ at the outset. For this problem, we analyze the competitiveness of any deterministic semi-online algorithm \textit{ALG} by considering two practically significant \textit{input job sequence patterns}, i.e., $I_1$ and $I_2$ based on the known $Decr$. As we know $Decr$ means, $n$ jobs arriving such that $p_1\geq p_2\geq p_3\geq \ldots \geq p_n$. We now can distinguish the input patterns $I_1$ and $I_2$ as follows.\\\\
\hspace*{1.2cm}  \textit{$I_1$ :} $p_1=p_2=p_3= \ldots = p_n$, i.e., $p_{i+1}=p_i$, $\forall i$.\\
\hspace*{1.2cm} \textit{$I_2$ :} $p_1>p_2>p_3> \ldots > p_n$, i.e., $p_{i+1} < p_i $, $\forall i$.\\\\
The input pattern $I_1$ often arises in a client server environment, where for an instance a client requests only for the static web pages or several clients request for the home page of a website on the fly. The web server processes such requests by taking almost equal time for each of them. The input pattern $I_2$ is natural in the context of product manufacturing and parallel computations, where a larger size task is splitted into several smaller size jobs and the jobs are given one by one in order of decreasing sizes to a central scheduler for the assignment of scarce resources. The objective is to yield a product or a result within a minimum time.   
Any of the other possibilities on the $Decr$ must be a subset of the union of $I_1$ and $I_2$. In fact, in $I_1$, the information on $Decr$ is meaningless. The problem can also be interpreted as semi-online scheduling of a sequence of equal size jobs with known $Sum$. While the problem with $I_2$ can be considered as semi-online scheduling of a sequence of decreasing size jobs with known $Sum$. We study the problem with respect to $I_1$ and $I_2$ in two and three identical parallel machines settings as follows.    
\subsection{Two Identical Parallel Machines ($P_2$)}
When the objective is to minimize the makespan $C_{max}$, we denote the problem as $P_2|Decr, Sum|C_{max}$. Let $C_{ALG}$ and $C_{OPT}$ be the makespans incurred by any deterministic semi-online algorithm $ALG$ and the optimal offline algorithm $OPT$ respectively. The lower bound on the competitive ratio of the problem $P_2|Decr, Sum|C_{max}$ represents the smallest $b(\geq 1)$ such that $\frac{C_{ALG}}{C_{OPT}}\geq b$ for all instances of the problem. We show the lower bound results of the problem as follows.\\\\ 
\subsubsection{Lower Bound Results}
\begin{theorem}
Any deterministic semi-online algorithm $ALG$ for the problem $P_2|Decr, Sum|C_{max}$ has a competitive ratio of at least $1.33$. 
\end{theorem}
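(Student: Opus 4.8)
\emph{Proof plan.} The plan is to defeat every deterministic algorithm with a \emph{single} fixed instance rather than an adaptive adversary, since the obstruction here is purely combinatorial (pigeonhole) and therefore needs no lookahead-chasing. Fix an arbitrary announced value $Sum>0$ and let the adversary present exactly three jobs of equal size $p_1=p_2=p_3=\frac{1}{3}\cdot Sum$. This is a legitimate instance of $P_2|Decr,Sum|C_{max}$: a constant sequence trivially satisfies $p_{i+1}\le p_i$, the three sizes sum to the announced $Sum$, and it is precisely an instance of the pattern $I_1$.

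First I would lower-bound $C_{ALG}$. Since $ALG$ must place each of the three jobs on one of the two machines, the pigeonhole principle forces some machine to receive at least two of them, so that machine's load is at least $\frac{2}{3}\cdot Sum$; hence $C_{ALG}\ge \frac{2}{3}\cdot Sum$. I would stress that this inequality is unconditional: even an algorithm that deduces from $p_1$ and $Sum$ that exactly three equal jobs are coming cannot avoid stacking two of them. Next I would evaluate $C_{OPT}$ using the convention of Table~\ref{tab:Basic Terms Notations and Definition}: $C_{OPT}=\max\{\frac{1}{2}\sum_{i}p_i,\ p_{\max}\}=\max\{\frac{1}{2}\cdot Sum,\ \frac{1}{3}\cdot Sum\}=\frac{1}{2}\cdot Sum$. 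Combining the two estimates gives $\frac{C_{ALG}}{C_{OPT}}\ge \frac{(2/3)\,Sum}{(1/2)\,Sum}=\frac{4}{3}\ge 1.33$, and since $ALG$ was arbitrary the competitive ratio of $P_2|Decr,Sum|C_{max}$ is at least $1.33$.

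There is essentially no hard step; the only points needing care are (i) verifying that the constant three-job sequence is admissible for the $Decr$ constraint and exhausts the announced $Sum$, and (ii) plugging into the paper's definition $C_{OPT}=\max\{\frac1m\sum p_i,p_{\max}\}$ rather than the integral optimum when forming the denominator. If a family of witnesses is preferred over one instance, I would note that $2k+1$ equal jobs of size $\frac{Sum}{2k+1}$ yield ratio at least $\frac{2k+2}{2k+1}$, which is maximized at $k=1$ and recovers exactly $\frac{4}{3}$, so three jobs is the sharpest choice.
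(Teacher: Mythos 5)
Your proof is correct relative to the paper's own conventions, but it takes a genuinely different route. The paper runs an adaptive adversary: it fixes $Sum=k>6$, sends an oversized first job $p_1=\frac{k+3}{3}>\frac{k}{3}$, and then chooses $p_2,p_3$ (equal sizes in one branch, strictly decreasing in the other) according to where $ALG$ placed the first two jobs, yielding three cases whose worst branch only approaches $\frac{4}{3}$ as $k\to\infty$ (Eq.~(3) gives $\frac{4}{3}-\frac{2}{k}$). You instead defeat every deterministic algorithm with a single oblivious instance, three equal jobs of size $\frac{1}{3}Sum$, and a pigeonhole argument forcing $C_{ALG}\ge\frac{2}{3}Sum$ against $C_{OPT}=\frac{1}{2}Sum$; this is shorter, needs no case analysis, and gives exactly $\frac{4}{3}$ for every finite $Sum$ rather than only in the limit. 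The paper itself endorses your instance elsewhere (it cites $p_1=p_2=p_3=1$ as the obstruction showing \textit{2DS} is best possible for $I_1$), so nothing in your argument is foreign to the authors' framework; what their adaptive construction buys in exchange is the exhibition of hard continuations under both input patterns $I_1$ and $I_2$, which they reuse to motivate the later discussion.

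One caveat you correctly flag, and which you share with the paper rather than introduce: the whole argument uses the convention $C_{OPT}=\max\{\frac{1}{m}\sum_i p_i,\,p_{max}\}$ from Table~\ref{tab:Basic Terms Notations and Definition}. Against the true achievable offline optimum, your instance gives makespan $\frac{2}{3}Sum$ for \emph{both} $ALG$ and $OPT$ (ratio $1$), and indeed the genuine tight bound for $P_2|Decr,Sum|C_{max}$ is $\frac{10}{9}$ (Tan and He [7]); the paper's own cases suffer the same issue, e.g.\ in Case~1 the true optimum is $\frac{2k-3}{3}>\frac{k}{2}$. So your proof stands or falls exactly where the paper's does, and under the paper's reading of $C_{OPT}$ it is complete.
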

\begin{proof}
We prove the theorem by adversary method. Let $Sum=k(> 6)$ is known and we are given a sequence of three jobs $J_1$, $J_2$ and $J_3$, where $p_1=\frac{k+3}{3}$ and $p_2+p_3=\frac{2k-3}{3}$. We consider the following two cases based on the assignment of initial two jobs.\\\\
\textit{Case 1.  $J_1$ and $J_2$ are assigned to the same machine.}\\ Let us consider $p_2=p_3=\frac{2k-3}{6}$. By assigning $J_3$ on the other machine, we have $C_{ALG}\geq \frac{4k+3}{6}$, while $C_{OPT}=\frac{k}{2}$. Therefore, \\
\hspace*{2.7cm} $\frac{C_{ALG}}{C_{OPT}}\geq \frac{4}{3}+\frac{1}{k}=1.33+\frac{1}{k}$ \hspace*{4.2cm} (1) \\\\
\textit{Case 2.  $J_1$ and $J_2$ are assigned to two different machines.}\\ Let us consider $p_2=\frac{k}{3}$ and $p_3=\frac{k-3}{3}$.\\
\textit{Sub-case 2.1.} $J_3$ is assigned to the machine, where job $J_1$ has been assigned. We now have $C_{ALG}\geq \frac{2k}{3}$, while $C_{OPT}= \frac{k}{2}$. This implies \\ 
\hspace*{2.7cm} $\frac{C_{ALG}}{C_{OPT}}\geq \frac{4}{3}=1.33$ \hspace*{5.7cm} (2) \\\\
\textit{Sub-case 2.2.} $J_3$ is assigned to the machine, where job $J_2$ has been assigned. We now have $C_{ALG}\geq \frac{2k-3}{3}$, while $C_{OPT}=\frac{k}{2}$. This implies \\ 
\hspace*{2.5cm} $\frac{C_{ALG}}{C_{OPT}}\geq \frac{4}{3}-\frac{2}{k} \rightarrow 1.33$ as $k\rightarrow \infty$\hspace*{3.3cm} (3) \\\\
By Eqs. (1), (2) and (3), we can conclude that there exists an instance of the problem $P_2|Decr, Sum|C_{max}$  such that $\frac{C_{ALG}}{C_{OPT}}\geq 1.33$. \hfill\(\Box\)
\end{proof} 
It can be observed that in the proof of Theorem 1, we have considered the critical instances based on the input patterns $I_1$ and $I_2$ with minimum number of jobs to show all possible assignments. We have obtained the lower bound of the problem by considering the minimal makespan, which was incurred by the best assignment policy.  The result reflects that  the largest value of $k$ leads to a generalized lower bound of the problem on the competitive ratio. However, we can further minimize the lower bound by considering only $I_2$ and the least value of $k$ i.e., $k=7$. We now present the lower bound of the problem by considering only $I_2$ as follows. 
\\\\
\begin{theorem}
Any deterministic semi-online algorithm $ALG$ for the problem  $P_2|Decr, Sum|C_{max}$ with $I_2$ has a competitive ratio of at least $1.04$.
\end{theorem}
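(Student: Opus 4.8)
The plan is to reuse the adversary argument from the proof of Theorem~1, but restricted to strictly decreasing sequences, and to make the value of $Sum$ as small as the construction allows. The key point is that the instance used in Case~1 of Theorem~1 (which sets $p_2=p_3$) is not admissible under $I_2$, so only the Case~2 family of instances survives; conveniently, that family is $p_1=\frac{k+3}{3}>p_2=\frac{k}{3}>p_3=\frac{k-3}{3}$, which is already strictly decreasing for every $k>6$ and is therefore a legitimate $I_2$ input. Since the ratio extracted from Case~2 decreases in $k$, the strongest lower bound this family yields is attained at the least admissible value $k=7$. So the first step is simply to instantiate the Theorem~1 construction with $Sum=7$, i.e. the concrete three-job sequence $p_1=\frac{10}{3}>p_2=\frac{7}{3}>p_3=\frac{4}{3}$.

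Next I would run the same case split as in Theorem~1. Let $M_1$ be the machine that receives $J_1$. If $J_2$ is also placed on $M_1$, then wherever $J_3$ is put we already have $C_{ALG}\ge p_1+p_2=\frac{17}{3}$, which makes $\frac{C_{ALG}}{C_{OPT}}$ lie well above $1.04$. Hence the only branch that needs care is the one where $J_2$ goes to the empty machine $M_2$: there, placing $J_3$ on either machine leaves a load of at least $\frac{2k-3}{3}$ on one of them, so exactly as in Sub-case~2.2 of Theorem~1 one gets $\frac{C_{ALG}}{C_{OPT}}\ge\frac{4}{3}-\frac{2}{k}$. Substituting $k=7$ gives $\frac{C_{ALG}}{C_{OPT}}\ge\frac{22}{21}>1.04$, and combining the two branches proves the claim for every deterministic $ALG$.

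The delicate step — and where I expect the real work to be — is justifying that $C_{OPT}=\frac{k}{2}$ for the surviving instance, because for the bare three-job sequence the natural split putting $J_1$ alone against $\{J_2,J_3\}$ has makespan $\frac{2k-3}{3}>\frac{k}{2}$ and no subset of $\{p_1,p_2,p_3\}$ sums to exactly $\frac{k}{2}$. To make the bound genuinely go through one must either retune the three sizes so that some subset hits $\frac{k}{2}$ while preserving strict monotonicity, or append a short strictly decreasing tail of jobs small enough that $OPT$ can spread them to reach a balanced load $\frac{k}{2}$ on each machine. The real difficulty is that appending such a tail also tends to give the online algorithm room to rebalance in the $J_2$-separated branch, so one has to keep the admissible completions ambiguous — possibly over several steps — so that no deterministic $ALG$ can commit correctly. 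Once $C_{OPT}$ is pinned down, the remaining arithmetic is routine and mirrors Theorem~1.
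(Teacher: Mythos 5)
Your proposal, as written, does not close the argument: you correctly isolate the crucial step (justifying $C_{OPT}=\frac{k}{2}$ for a bare three-job strictly decreasing instance) and then leave it open, offering only the suggestion of retuning the sizes or appending a small decreasing tail. That is a genuine gap, because for your instance $p_1=\frac{k+3}{3}>p_2=\frac{k}{3}>p_3=\frac{k-3}{3}$ the true offline optimum is $\frac{2k-3}{3}$, which any algorithm that separates $J_1$ from $\{J_2,J_3\}$ attains, so against the true optimum the ratio collapses to $1$ and no lower bound follows. The paper does not resolve this difficulty either; it sidesteps it. Its proof uses a different fixed instance, $p_1=\frac{12k}{25}>p_2=\frac{7k}{25}>p_3=\frac{6k}{25}$ with $k\geq 7$, observes that every assignment has makespan at least $\frac{13k}{25}$ (the best partition puts $J_1$ alone), and then divides by $C_{OPT}=\frac{k}{2}$ taken directly from the paper's working definition $C_{OPT}=\max\{\frac{1}{m}\sum_i p_i,\,p_{max}\}$ in Table~1, yielding exactly $\frac{26}{25}=1.04$. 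So the ``delicate step'' you flagged is handled in the paper by convention (using the $\frac{Sum}{2}$ lower bound as $C_{OPT}$), not by constructing an instance whose true optimum equals $\frac{Sum}{2}$.

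Under that same convention your $k=7$ instance gives $\frac{22}{21}>1.04$, so your construction would even be slightly stronger; conversely, your insistence on pinning down the true optimum is the mathematically honest route, and completing it would require exactly what you sketch (e.g.\ an adaptive continuation or a strictly decreasing tail that lets the offline schedule balance to $\frac{Sum}{2}$ while keeping the online algorithm committed), which neither you nor the paper actually carries out. To match the paper you should either state explicitly that the ratio is measured against the lower bound $\max\{\frac{Sum}{2},p_{max}\}$, or do the extra construction you describe; as it stands the proposal stops one step short of a proof.
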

\begin{proof}
To prove our claim let us consider $Sum=k(\geq 7)$ and a sequence of three jobs $J_1$, $J_2$ and $J_3$, where $p_1=\frac{12k}{25}$ and $p_2+p_3=\frac{13k}{25}$. Let us consider the best assignment policy and the critical instance as follows. The jobs $J_1$ and $J_2$ are scheduled on two different machines and the job $J_3$ is assigned to the machine where $J_2$ has already been assigned. For $p_2=\frac{7k}{25}$ and $p_3=\frac{6k}{25}$, we have $C_{ALG}\geq \frac{13k}{25}$, while $C_{OPT}=\frac{k}{2}$. This implies, $\frac{C_{ALG}}{C_{OPT}}\geq \frac{26}{25}=1.04$. \hfill\(\Box\)
\end{proof}
\subsubsection{Algorithm 2DS}
In the problem $P_2|Decr, Sum|C_{max}$, we know larger size jobs arrive upfront in the sequence and the total sum of sizes of all jobs beforehand. We aim to schedule the jobs in such a manner that the load of one of the machines is always at most $\frac{1}{2}\cdot Sum$. We propose a deterministic semi-online algorithm named \textit{2DS} for the problem by considering the input patterns $I_1$ and $I_2$. Let $C_{2DS}$ be the makespan incurred by algorithm \textit{2DS}. The algorithm works as follows. 
\begin{algorithm}
\caption{2DS}
\begin{algorithmic}
\scriptsize
\STATE Initially, $l_1=l_2=0$, $Sum=\sum_{i=1}^{n}{p_i}$, and $p_{i+1}\leq p_i$ for $1\leq i< n$ \\
\STATE WHILE a new job $J_{i}$ is given with known $Decr$ and $Sum$ DO\\
\STATE \hspace*{0.2cm} BEGIN\\
\STATE \hspace*{0.5cm} IF $l_1+p_i\leq \frac{1}{2}\cdot Sum$  \\
\STATE \hspace*{0.8cm} THEN assign job $J_i$ to machine $M_1$ \\
\STATE \hspace*{0.8cm} UPDATE $l_1=l_1+p_i$\\
\STATE \hspace*{0.5cm} ELSE \\
\STATE \hspace*{0.8cm} assign job $J_i$ to machine $M_2$ \\
\STATE \hspace*{0.8cm} UPDATE $l_2=l_2+p_i$\\
\STATE \hspace*{0.5cm} $i=i+1$\\
\STATE \hspace*{0.2cm} END\\
\STATE Return \hspace*{0.3cm} $C_{2DS}=\max\{l_1, l_2\}$
\end{algorithmic}
\end{algorithm}\\
The upper bound on the competitive ratio of our proposed algorithm $2DS$ for the problem $P_2|Decr, Sum|C_{max}$ represents the smallest $b(\geq 1)$ such that $\frac{C_{2DS}}{C_{OPT}}\leq b$ for all job sequences of the problem. We now show the upper bound results of algorithm \textit{2DS} with  respect to $I_1$ and $I_2$ by Theorem 3 and Theorem 4 respectively as follows.
\subsubsection{Upper Bound Result}
\begin{theorem}
For all job sequences  of the problem $P_2|Sum|C_{max}$ with $I_1$, where $p_i=x$, $x\geq 1$, $n\geq 3$ and $1\leq i\leq n$, we have 
$\frac{C_{2DS}}{C_{OPT}}\leq 1.33$.
\end{theorem}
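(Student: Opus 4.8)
The plan is to compute exactly how algorithm \textit{2DS} partitions a sequence of $n$ equal jobs and then to bound the resulting makespan against $C_{OPT}$. Write $p_i=x$ for all $i$, so that $Sum=nx$ and, since $n\ge 3$, we have $C_{OPT}=\max\{\tfrac{1}{2}Sum,\ p_{max}\}=\max\{\tfrac{nx}{2},\ x\}=\tfrac{nx}{2}$. The whole argument reduces to identifying $C_{2DS}$ in closed form as a function of $n$ and $x$.

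First I would track the load $l_1$ of $M_1$ during the run. Initially $l_1=0$, and I claim that $M_1$ receives precisely the first $\lfloor n/2\rfloor$ jobs while $M_2$ receives the remaining $\lceil n/2\rceil$ jobs. Indeed, whenever $l_1=jx$ with $j+1\le n/2$, the test $l_1+p_i\le\tfrac{1}{2}Sum$ holds and $J_i$ is placed on $M_1$; the first index for which the test fails is $i=\lfloor n/2\rfloor+1$. The one step that needs a little care — and the main (mild) obstacle — is showing that this deflection to $M_2$ is permanent: once a job is sent to $M_2$, $l_1$ no longer changes, and because all later jobs have the same size $x$, the threshold test keeps failing, so every subsequent job is also assigned to $M_2$. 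Hence at termination $l_1=\lfloor n/2\rfloor x$ and $l_2=\lceil n/2\rceil x$, giving $C_{2DS}=\max\{l_1,l_2\}=\lceil n/2\rceil x$.

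Finally I would substitute into the ratio: $\frac{C_{2DS}}{C_{OPT}}=\frac{\lceil n/2\rceil x}{nx/2}=\frac{2\lceil n/2\rceil}{n}$. When $n$ is even this equals $1$; when $n$ is odd it equals $\frac{n+1}{n}=1+\frac{1}{n}$, which is decreasing in $n$ and is therefore maximized over all odd $n\ge 3$ at $n=3$, where it equals $\frac{4}{3}=1.33$. Consequently $\frac{C_{2DS}}{C_{OPT}}\le 1.33$ for every admissible job sequence of the problem $P_2|Sum|C_{max}$ with $I_1$, and equality is attained by the three equal-job instance, which also matches the lower bound of Theorem 1 specialized to $I_1$. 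This completes the proof outline.
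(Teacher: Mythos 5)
Your proposal is correct and follows essentially the same route as the paper's proof: both compute the exact split produced by \textit{2DS} on equal jobs (first $\lfloor n/2\rfloor$ jobs to $M_1$, the rest to $M_2$), compare $\lceil n/2\rceil x$ against $C_{OPT}=\tfrac{nx}{2}$, and observe the ratio is maximized at $n=3$ giving $\tfrac{4}{3}$. Your added justification that the deflection to $M_2$ is permanent is a small extra verification the paper leaves implicit, but the argument is otherwise the same case analysis on the parity of $n$.
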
 
\begin{proof}
We consider the following two cases based on the number of jobs $n$. \\
\textit{Case 1. If $n$ is even.}\\
 Let us consider $n=2a$ for any $a\geq 2$. As we know that $p_i=x$ for $1\leq i\leq n$, we have $Sum=2ax$, implies, $C_{OPT}= ax$. As $n$ is even, algorithm \textit{2DS} assigns the initial $\frac{n}{2}=a$ jobs to machine $M_1$ and incurs $l_1=ax$. The remaining $a$ jobs are scheduled on machine $M_2$ to obtain $l_2=ax$. We now have $C_{2DS}= ax$, this implies, $\frac{C_{2DS}}{C_{OPT}}= 1$. \\
\textit{Case 2. If $n$ is odd.}\\
Let us consider $n=2a+1$ for any $a\geq 1$. We have $Sum=(2a+1)x$, implies,  $C_{OPT}= \frac{(2a+1)x}{2}$. Algorithm \textit{2DS} assigns the initial $a$ jobs to machine $M_1$ and incurs load $l_1=ax < \frac{(2a+1)x}{2}$. The remaining $a+1$ jobs are scheduled on machine $M_2$ to obtain a load $l_2=(a+1)x$. As $l_2>l_1$, we now have $C_{2DS}\leq (a+1)x$. Therefore, we have $\frac{C_{2DS}}{C_{OPT}}\leq \frac{2x(a+1)}{x(2a+1)}\leq \frac{2a+2}{2a+1}\leq \frac{4}{3}=1.33$, for $n\geq 3$ and $a\geq 1$. \hfill\(\Box\)
\end{proof} 
\begin{theorem}
For all job sequences  of the problem $P_2|Decr, Sum|C_{max}$ with $I_2$, we have $\frac{C_{2DS}}{C_{OPT}}\leq 1.33$.
\end{theorem}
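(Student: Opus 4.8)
The plan is to analyze the behavior of algorithm \textit{2DS} on an arbitrary decreasing sequence and bound the final makespan $C_{2DS}$ against $C_{OPT}=\max\{\tfrac{1}{2}Sum, p_1\}$. First I would observe how the algorithm partitions the sequence: it greedily packs jobs onto $M_1$ as long as $l_1+p_i\le\tfrac{1}{2}Sum$, and once a job is rejected by $M_1$ it (and, crucially, all subsequent jobs, since they are no larger and $l_1$ never decreases) goes to $M_2$. So let $J_t$ be the first job sent to $M_2$; then $l_1=p_1+\dots+p_{t-1}\le\tfrac{1}{2}Sum$ and $l_2=p_t+\dots+p_n$. Hence $C_{2DS}=\max\{l_1,l_2\}$, and since $l_1\le\tfrac12 Sum\le C_{OPT}$, the only thing to control is $l_2$.

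Next I would split into two cases according to whether the machine-$M_1$ packing is ever ``full''. \textbf{Case A: $t=1$}, i.e. even the first job does not fit, $p_1>\tfrac12 Sum$. Then everything is on $M_2$, $l_2=Sum$, but also $C_{OPT}=p_1>\tfrac12 Sum$, so $\tfrac{C_{2DS}}{C_{OPT}}=\tfrac{Sum}{p_1}<2$; I'd need to sharpen this using $I_2$ (strict decrease) — actually with $p_1>\tfrac12 Sum$ the remaining $n-1\ge 2$ jobs each have size $<p_1$ but sum to $<\tfrac12 Sum<p_1$, and one checks $\tfrac{Sum}{p_1}\le\tfrac43$ fails in general, so this case must in fact be impossible or forced to $n$ small; I expect the intended reading is $n\ge 3$ with $p_1\le\tfrac12 Sum$, which holds whenever $p_1\le p_2+p_3+\dots$, automatically true once $p_1$ is not more than half. \textbf{Case B: $t\ge 2$.} Here the key inequality is the rejection condition for $J_t$: $l_1+p_t>\tfrac12 Sum$, so $l_2=Sum-l_1<\tfrac12 Sum+p_t$. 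Since $J_t$ was not the first job, $p_t\le p_1=p_{max}$ and moreover $p_t\le\tfrac12 l_1$ is \emph{not} guaranteed, so I would instead bound $p_t$ by the average: because $p_1,\dots,p_{t-1}$ are each $\ge p_t$ and they already exceed $\tfrac12 Sum-p_t$... the cleanest route is $p_t\le p_1\le C_{OPT}$ and $l_2<\tfrac12 Sum+p_t\le C_{OPT}+C_{OPT}$, again only giving $2$. To get $1.33$ I must use that $t\ge 2$ means $p_t\le p_1$ \emph{and} $p_1$ alone occupies part of $l_1$, giving $l_2<\tfrac12 Sum+p_t$ while simultaneously $C_{OPT}\ge p_1\ge p_t$ and $C_{OPT}\ge\tfrac12 Sum$; the right combination is $l_2-\tfrac12 Sum<p_t$ together with $l_1\ge p_1\ge p_t$ forcing $l_1\ge p_t$, hence $Sum=l_1+l_2\ge p_t+(\tfrac12 Sum+ \text{something})$ — I would push this to show $p_t\le\tfrac13 Sum$, whence $l_2<\tfrac12 Sum+\tfrac13 Sum=\tfrac56 Sum\le\tfrac53 C_{OPT}$, still too weak, so the sharp argument needs $p_t\le\tfrac14 Sum$, i.e. three jobs of size $\ge p_t$ fit below the half line.

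The main obstacle, therefore, is pinning down exactly how large the first rejected job $p_t$ can be relative to $Sum$. The honest bound is: $l_1\ge p_t$ (at least one job, namely $J_1$, of size $\ge p_t$, sits on $M_1$ — in fact if $t\ge 3$ then $l_1\ge 2p_t$), and $l_1\le\tfrac12 Sum$, combined with $l_2=Sum-l_1$ and $l_1+p_t>\tfrac12 Sum$. From $l_1\ge p_t$ and $l_1+p_t>\tfrac12 Sum$ one cannot alone get below $2$; the improvement to $1.33$ must exploit $t\ge 2$ versus $t\ge 3$ separately: if $t=2$ then $l_1=p_1$ and $l_1+p_2>\tfrac12 Sum$ with $p_2\le p_1$ gives $p_1>\tfrac14 Sum$, and $C_{OPT}\ge p_1$, while $l_2=Sum-p_1<\tfrac34 Sum<\tfrac43 p_1\le\tfrac43 C_{OPT}$ — this case is clean; if $t\ge 3$ then $l_1\ge p_1+p_2\ge 2p_t$ so $3p_t\le l_1+p_t\le\tfrac12 Sum+p_t$ is automatic but also $2p_t\le l_1\le\tfrac12 Sum$ gives $p_t\le\tfrac14 Sum$, hence $l_2<\tfrac12 Sum+\tfrac14 Sum=\tfrac34 Sum\le\tfrac32 C_{OPT}$ — still need the extra factor. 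I would close the gap by noting that in the $t\ge3$ subcase $C_{OPT}\ge\tfrac12 Sum$ can be replaced by the observation that $l_2$ consists of jobs each $\le p_t\le\tfrac14 Sum$, so $l_2\le\tfrac12 Sum$ would follow unless... — at this point I expect the paper to either restrict to $n\le$ some small value or to invoke a load-rebalancing observation; my plan would be to present the $t=2$ computation in full, handle $t\ge3$ via $p_t\le\tfrac14 Sum\Rightarrow l_2<\tfrac34 Sum$ and then argue $C_{OPT}\ge\tfrac34\cdot\tfrac{l_2}{\,\cdot\,}$... concretely, since $l_1\le\tfrac12 Sum$ and $l_2=Sum-l_1$, we get $\tfrac{C_{2DS}}{C_{OPT}}\le\tfrac{l_2}{\frac12 Sum}=2-\tfrac{2l_1}{Sum}$, and this is $\le\tfrac43$ exactly when $l_1\ge\tfrac13 Sum$; so the whole theorem reduces to showing the 2DS packing always leaves $l_1\ge\tfrac13 Sum$, which I would prove by: the job $J_t$ that overflowed satisfies $p_t\le\tfrac13 Sum$ (else $p_t>\tfrac13 Sum$ means each of $J_1,\dots,J_t$ exceeds $\tfrac13 Sum$ so $t\le 2$, and $t=1$ gives $p_1>\tfrac13 Sum$ directly handled, $t=2$ handled above), hence $l_1>\tfrac12 Sum-p_t\ge\tfrac12 Sum-\tfrac13 Sum=\tfrac16 Sum$ — and iterating/using that the next jobs are even smaller tightens $l_1$ up to $\tfrac13 Sum$. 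That reduction ($\tfrac{C_{2DS}}{C_{OPT}}\le\tfrac43\iff l_1\ge\tfrac13 Sum$) is the spine of the proof, and establishing $l_1\ge\tfrac13 Sum$ from the decreasing property is the step I expect to require the most care.
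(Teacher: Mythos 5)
There is a genuine gap: your write-up is a plan whose decisive step is never carried out, and two of the steps you do commit to are incorrect. First, the opening structural claim misreads the algorithm: \textit{2DS} does not close $M_1$ after the first rejection. The test $l_1+p_i\leq \frac{1}{2}\cdot Sum$ is applied to every arriving job, so a later (smaller) job can still be placed on $M_1$ even though an earlier one was sent to $M_2$; hence the prefix/suffix decomposition $l_1=p_1+\dots+p_{t-1}$, $l_2=p_t+\dots+p_n$ is not what the algorithm produces (the paper's own Sub-sub-case 2.2.2, which speaks of the \emph{latest} job $J_k$ assigned to $M_1$, is there precisely to handle such returns). Second, your ``clean'' $t=2$ computation is wrong: from $p_1+p_2>\frac{1}{2}\cdot Sum$ and $p_2<p_1$ you get only $p_1>\frac{1}{4}\cdot Sum$, and the chain $Sum-p_1<\frac{3}{4}\cdot Sum<\frac{4}{3}p_1$ would require $p_1>\frac{9}{16}\cdot Sum$, which you do not have. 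Third, and most importantly, you correctly reduce the theorem to showing that the final load satisfies $l_1\geq\frac{1}{3}\cdot Sum$ whenever $l_2$ is the makespan, but you never prove this; your own intermediate bound only yields $l_1>\frac{1}{6}\cdot Sum$, i.e.\ a ratio below $\frac{5}{3}$, and you explicitly defer the missing argument (``the step I expect to require the most care''). As submitted, the proposal therefore does not establish the $1.33$ bound.

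For comparison, the paper does not close this gap with a general load inequality either: its proof is a case analysis on $p_1$ versus $\frac{1}{2}\cdot Sum$ and on how many jobs land on $M_1$, and in the hard subcase it argues through a single ``critical instance'' with $p_1=p_{max}$, $p_2=p_{max}-1$, $p_3=p_{max}-2$ together with an assumed range $2\leq p_n\leq p_{max}-2$, rather than deriving the bound for an arbitrary sequence. So the quantitative fact you identified as the spine of the argument (a lower bound on the final $l_1$, equivalently an upper bound on $l_2$ in terms of $C_{OPT}$) is exactly where both your proposal and the paper's reasoning would need the most scrutiny; your reduction is a sensible way to frame the problem, but without a proof that $l_1\geq\frac{1}{3}\cdot Sum$ (or a complete treatment of the configurations in which it fails), the theorem is not proved.
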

\begin{proof}
We prove the theorem by critical case analysis method. Let $p_{max}$ be the largest job, we have $p_{max}=p_1$. Let us  consider the following two cases based on the size of the first job.\\
\textit{Case 1. If $p_1 > \frac{1}{2}\cdot Sum$.} \\
Then $\sum_{i=2}^{n}{p_i} < \frac{1}{2}\cdot Sum$, implies $C_{OPT}=p_1$. Algorithm \textit{2DS} assigns $J_1$ to machine $M_2$ and incurs a load  $l_2=p_1$. The remaining jobs are assigned to machine $M_1$, which incurs a load $l_1 < \frac{1}{2}\cdot Sum < p_1$. Implies, $C_{2DS}=l_2=p_1$. Therefore, $\frac{C_{2DS}}{C_{OPT}}=1$.\\
\textit{Case 2. If $p_1\leq \frac{1}{2}\cdot Sum$.}\\
\textit{Sub-case 2.1. If $p_1=\frac{1}{2}\cdot Sum$.}\\
 Then, $\sum_{i=2}^{n}{p_i} = \frac{1}{2}\cdot Sum$. Algorithm \textit{2DS} assigns $J_1$ to machine $M_1$ and incurs a load $l_1=\frac{1}{2}\cdot Sum$. The remaining jobs are scheduled on machine $M_2$, which incurs a load $l_2=\frac{1}{2}\cdot Sum$. Thus, $C_{2DS}=l_1=l_2=\frac{1}{2}\cdot Sum$, while $C_{OPT}=\frac{1}{2}\cdot Sum$. Therefore, \hspace*{4.0cm}$\frac{C_{2DS}}{C_{OPT}}=1$.\\\\
\textit{Sub-case 2.2. If $p_1 < \frac{1}{2}\cdot Sum$.}\\
Then algorithm \textit{2DS} schedules $J_1$ on machine $M_1$. Here, the following two cases arise.\\
\textit{Sub-sub-case 2.2.1. No other job is assigned to machine $M_1$.}\\ Remaining jobs are scheduled on machine $M_2$. We now have two possibilities.\\
\textit{Possibility 1. If $\sum_{i=2}^{n-1}{p_i} > \frac{1}{2}\cdot Sum$.}\\ Then $p_1+p_n < \frac{1}{2}\cdot Sum$, implies there exists a job $J_n$ which can be assigned to machine $M_1$, which is a contradiction to the Sub-sub-case 2.2.1. Hence, Possibility 1 does not occur.\\
\textit{Possibility 2. If $\sum_{i=2}^{n-1}{p_i}\leq \frac{1}{2}\cdot Sum$.    }\\
Then before the scheduling of $J_n$, load $l_2 \leq \frac{1}{2}\cdot Sum$. However, after scheduling of $j_n$, we have the updated load $l_2=l_2+p_n > \frac{1}{2}\cdot Sum$ as $l_1 < \frac{1}{2}\cdot Sum$. As $l_1 +p_n > \frac{1}{2}\cdot Sum$, without loss of generality, we can bound the value of $p_n$ as $2\leq p_n\leq p_{max}-2$ for $n\geq 3$. We consider the following critical instance of the problem to show the largest competitive ratio. Let us consider an instance with three jobs $J_1$, $J_2$ and $J_3$, where $p_1=p_{max}$, $p_2=p_{max}-1$ and $p_3=p_{max}-2$. Algorithm \textit{2DS} now schedules job $J_1$ on machine $M_1$ and assigns remaining jobs to machine $M_2$ to incur $l_1=p_{max}$ and $l_2=2p_{max}-3$. As $l_2 > l_1$, we now have $C_{2DS}\leq 2p_{max}-3$, while $C_{OPT}\geq \frac{3}{2}(p_{max}-1)$. Therefore, we have \\\\
\hspace*{0.8cm} $\frac{C_{2DS}}{C_{OPT}}\leq \frac{2(2p_{max}-3)}{3p_{max}-3}\leq \frac{4p_{max}-6}{3p_{max}-3}\leq \frac{4}{3}$, for $p_{max}\geq 3$ \hspace*{2.3cm}(4)\\\\
\textit{Sub-sub-case 2.2.2. At least one job other than $J_1$ is assigned to machine $M_1$.}\\
Let $J_k$ be the latest job, which has been assigned to  machine $M_1$ and let $l^{'}_{1}$ be the load of $M_1$ just before the scheduling of $J_k$. We now have \\\\
\hspace*{1.8cm} $l^{'}_{1}+p_k\leq \frac{1}{2}\cdot Sum > l_1$ \hspace*{6.1cm}(5)\\\\ 
If $J_k=J_n$, then $l^{'}_{1}+p_k = \frac{1}{2}\cdot Sum$ or $l^{'}_{1}+p_k < \frac{1}{2}\cdot Sum$. If $l^{'}_{1}+p_k = \frac{1}{2}\cdot Sum$, then $C_{2DS}=\frac{1}{2}\cdot Sum$, while $C_{OPT}=\frac{1}{2}\cdot Sum$, implies, \hspace*{3.1cm} $\frac{C_{2DS}}{C_{OPT}}=1$.\\\\
If $l^{'}_{1}+p_k < \frac{1}{2}\cdot Sum$, then let $l^{'}_{2}$ be the load of machine $M_2$ at this point of time. By Eq. (5), it is clear that $l^{'}_{2}<l_2$. Therefore, Eq. (4) holds for this case as well. We now can conclude that $\frac{C_{2DS}}{C_{OPT}}\leq \frac{4}{3}=1.33$.  \hfill\(\Box\)  
\end{proof} 
For the problem $P_2|Decr, Sum|C_{max}$ with $I_1$, the algorithm \textit{2DS} is best possible in the sense that no semi-online algorithm can overcome the critical instance with three jobs, where $p_1=p_2=p_3=1$ to obtain a better competitive ratio than $1.33$. However, we improve the upper bound $1.33$ on the competitive ratio to $1.16$ for the problem with $I_2$, i.e.,  $p_{i+1}< p_i$, $\forall i$ by proposing a deterministic semi-online algorithm named \textit{Improved 2DS}.
\subsubsection{Algorithm Improved 2DS (\textit{I2DS})}
schedules an incoming job $J_i$ on machine $M_1$ as long as the load $l_1+p_i\leq \frac{7}{12}\cdot Sum$. A job $J_x$, which incurs a load $p_x$ on machine $M_1$ such that $l_1+p_x > \frac{7}{12}\cdot Sum$ is assigned to machine $M_2$. All such $J_x$ are scheduled on machine $M_2$. The idea is to maintain loads on the machines $M_1$ and $M_2$ such that $l_1\leq \frac{7}{12}\cdot Sum$ and $\frac{5}{12}\leq l_2< \frac{7}{12}\cdot Sum$. For a better understanding of the competitive analysis of algorithm \textit{I2DS}, let us normalize the processing times of the jobs in such a manner that $Sum=12$.
\begin{algorithm}
\caption{I2DS}
\begin{algorithmic}
\scriptsize
\STATE Initially, $l_1=l_2=0$, $Sum=\sum_{i=1}^{n}{p_i}$, and $p_{i+1}< p_i$ for $1\leq i< n$ \\
\STATE WHILE a new job $J_{i}$ is given with known $Decr$ and $Sum$ DO\\
\STATE \hspace*{0.2cm} BEGIN\\
\STATE \hspace*{0.5cm} IF $l_1+p_i\leq \frac{7}{12}\cdot Sum$  \\
\STATE \hspace*{0.8cm} THEN assign job $J_i$ to machine $M_1$ \\
\STATE \hspace*{0.8cm} UPDATE $l_1=l_1+p_i$\\
\STATE \hspace*{0.5cm} ELSE \\
\STATE \hspace*{0.8cm} assign job $J_i$ to machine $M_2$ \\
\STATE \hspace*{0.8cm} UPDATE $l_2=l_2+p_i$\\
\STATE \hspace*{0.5cm} $i=i+1$\\
\STATE \hspace*{0.2cm} END\\
\STATE Return \hspace*{0.3cm} $C_{I2DS}=\max\{l_1, l_2\}$
\end{algorithmic}
\end{algorithm}
\subsubsection{Upper Bound Result}
\begin{theorem}
For all job sequences of the problem $P_2|Decr, Sum|C_{max}$ with $I_2$, where $n\geq 3$ and $p_n\geq 3$, we have $\frac{C_{I2DS}}{C_{OPT}}\leq 1.16$.
\end{theorem}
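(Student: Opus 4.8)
\textit{Proof plan.} The first step is to normalize the sizes so that $Sum=12$ (scaling all $p_i$ by a common factor changes neither the algorithm's decisions nor the ratio), so that $C_{OPT}\geq\max\{6,p_1\}$, and to note that \textit{I2DS} guarantees $l_1\leq 7$ throughout; hence only $l_2$ needs to be controlled. Before analysing the algorithm I would extract what the hypotheses force: since $p_1>p_2>\cdots>p_n\geq 3$ and $\sum_{i=1}^n p_i=12$, four distinct reals each at least $3$ would already sum to strictly more than $12$, so in fact $n=3$. The statement therefore reduces to the family of three-job instances with $p_1>p_2>p_3\geq 3$ and $p_1+p_2+p_3=12$, for which elementary inequalities give $4<p_1<6$, $3<p_2<5$ and $3\leq p_3<4$. (The cases $p_1>7$ and $6<p_1\leq 7$ would be immediate — there \textit{I2DS} puts $J_1$ on its own machine and every later job, being of size at least $3$, on the other, matching $C_{OPT}=p_1$ — but the hypotheses exclude them.)

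The second step is to run \textit{I2DS} on such an instance. Because $p_1<6<7$, job $J_1$ goes to $M_1$, so $l_1=p_1$. Because $p_1+p_2=12-p_3>8>7$ (using $p_3<4$), job $J_2$ overflows to $M_2$. Because $p_2<5$ — indeed $p_2\geq 5$ would force $p_3=12-p_1-p_2<2$, contradicting $p_3\geq 3$ — we have $p_1+p_3=12-p_2>7$, so $J_3$ also goes to $M_2$. Thus \textit{I2DS} outputs the partition $\{J_1\}\mid\{J_2,J_3\}$, with $l_1=p_1$, $l_2=12-p_1$, and $C_{I2DS}=\max\{p_1,12-p_1\}=12-p_1$ since $p_1<6$.

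The third step is a lower bound on $C_{OPT}$ for this configuration. In any two-machine schedule of $\{J_1,J_2,J_3\}$, if $J_2$ and $J_3$ are assigned to the same machine then that machine's load is at least $p_2+p_3=12-p_1$; if they are assigned to different machines, then the machine carrying $J_1$ has load at least $p_1+p_3$, which exceeds $p_2+p_3=12-p_1$ because $p_1>p_2$. Either way $C_{OPT}\geq 12-p_1=C_{I2DS}$, so $C_{I2DS}/C_{OPT}\leq 1\leq 1.16$; the advertised bound is met comfortably (in fact with equality $C_{I2DS}=C_{OPT}$ throughout this regime). If one prefers to stay with the cruder bookkeeping used elsewhere in the paper, the same conclusion follows from $C_{OPT}\geq 6$ when $p_1\geq 5$ (giving $(12-p_1)/6\leq 7/6$) together with the structural bound $C_{OPT}\geq 12-p_1$ when $p_1<5$.

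The step I expect to be the crux is precisely this last lower bound in the sub-case where both $J_2$ and $J_3$ land on $M_2$: there the design heuristic ``$l_2<\tfrac{7}{12}\cdot Sum$'' actually breaks (for example $p=(4.5,4,3.5)$ yields $l_2=7.5$), so one cannot close the argument by the one-liner ``$C_{I2DS}\leq\tfrac{7}{12}\cdot Sum$ and $C_{OPT}\geq\tfrac12\cdot Sum$''. The hypothesis $p_n\geq 3$ is what makes the argument go through, because it caps the number of jobs (to $n=3$ after normalization) so that the overflow onto $M_2$ consists of exactly $J_2$ and $J_3$, making the sharper inequality $C_{OPT}\geq p_2+p_3$ available. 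If one wanted the statement for unrestricted $n$, the real difficulty would shift to bounding how much total size beyond the first overflowing job accumulates on $M_2$, and to producing a matching lower bound on $C_{OPT}$ for that configuration.
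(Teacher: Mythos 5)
Your proof is correct, but it takes a genuinely different route from the paper's. The paper argues by cases on $p_1$: for $p_1>\frac{7}{12}\cdot Sum$ it gets $C_{I2DS}=C_{OPT}=p_1$, for $\frac{5}{12}\cdot Sum<p_1\leq\frac{7}{12}\cdot Sum$ it combines the algorithmic invariant $l_1\leq\frac{7}{12}\cdot Sum$ with $l_2=Sum-l_1<\frac{7}{12}\cdot Sum$ and $C_{OPT}\geq\frac12\cdot Sum$ to get $7/6$, and then simply asserts that ``the bound holds for any other cases as well.'' You instead exploit the hypothesis $p_n\geq 3$ (read in the normalized units $Sum=12$) to force $n=3$ and $4<p_1<6$, trace the algorithm's schedule exactly ($\{J_1\}$ versus $\{J_2,J_3\}$), and prove the sharper fact $C_{I2DS}=C_{OPT}=12-p_1$, i.e.\ ratio $1$ on the whole admissible family. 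This buys two things the paper's argument does not have: it closes precisely the case the paper waves away ($p_1\leq\frac{5}{12}\cdot Sum$), and it exhibits why the paper's intended one-liner cannot work there --- your example $(4.5,4,3.5)$ shows the invariant $l_2\leq\frac{7}{12}\cdot Sum$ genuinely fails, so the needed lower bound must come from $C_{OPT}\geq p_2+p_3$ rather than from $C_{OPT}\geq\frac12\cdot Sum$. The one caveat is interpretive rather than logical: your reduction to $n=3$ requires $p_n\geq 3$ to be meant relative to the normalization $Sum=12$ (if it were meant in original units, scaling destroys it and arbitrary $n$ reappears, as you yourself note in your closing paragraph); under that reading, which is the only one giving the hypothesis any force alongside the paper's normalization, your argument is complete, whereas the paper's covers the general-$n$ cases with $p_1>\frac{5}{12}\cdot Sum$ but leaves the remaining case unproved.
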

\begin{proof}
When no job is assigned to any of the machines, we have loads $l_1=l_2=0$. It is given that $Sum=12$. This implies, \\
\hspace*{2.5cm} $C_{OPT}\geq 6$ \hspace*{7.0cm}(6)
We consider the following two critical cases based on the size of the first job $J_1$.\\
\textit{Case 1.} If $p_1>\frac{7}{12}\cdot Sum$.\\
Let us consider $A=\sum_{i=2}^{n}{p_i}$. It is clear that $A< \frac{5}{12}\cdot Sum$. Algorithm \textit{I2DS} now schedules job $J_1$ on machine $M_2$ and assigns the remaining jobs $J_i$ to machine $M_1$, where $2\leq i\leq n$. This implies, $l_2=p_1> \frac{7}{12}\cdot Sum$ and $l_1=A< \frac{5}{12}\cdot Sum$. As $l_2> l_1$, we now have $C_{I2DS}=p_1$, while $C_{OPT}=p_1$. \\
\textit{Case 2.} If $\frac{5}{12}\cdot Sum < p_1\leq \frac{7}{12}\cdot Sum$.\\
Then $l_1+p_1>\frac{5}{12}\cdot Sum$. Clearly, $l_1< \frac{5}{12}\cdot Sum$ as $l_1=0$ initially. Algorithm \textit{I2DS} schedules job $J_1$ on machine $M_1$ and the updated load $l_1\leq \frac{7}{12}\cdot Sum$. We now have  $l_2\leq \frac{7}{12}\cdot Sum$ as $l_1> \frac{5}{12}\cdot Sum$. This implies, $C_{I2DS}\leq \frac{7}{12}\cdot Sum$. Thus, we obtain from Eq. (6), $\frac{C_{I2DS}}{C_{OPT}}\leq \frac{7}{6}=1.16$. The bound holds for any other cases as well.  \hfill\(\Box\)
\end{proof}
\subsection{Three Identical Parallel Machines ($P_3$)}
When three identical parallel machines are given in the semi-online scheduling problem with known $Decr$ and $Sum$ to minimize the makespan ($C_{max}$), we denote the problem as $P_3|Decr, Sum|C_{max}$. Let $C_{ALG}$ and $C_{OPT}$ be the makespans obtained by any semi-online algorithm $ALG$ and the optimal offline algorithm $OPT$ respectively. We present the lower bound of the problem $P_3|Decr, Sum|C_{max}$ as follows. 
\subsubsection{Lower Bound Result}
\begin{theorem}
Any deterministic semi-online algorithm $ALG$ for the problem $P_3|Decr, Sum|C_{max}$ has a competitive ratio of at least $1.11$.
\end{theorem}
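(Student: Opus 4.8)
The plan is to argue by the adversary method, exactly as in the proofs of Theorems~1 and~2. I would keep $Sum=k$ as a parameter (taking $k$ as large as needed at the end, or normalising it to a convenient integer so that $C_{OPT}$ comes out round), release the jobs one by one in non-increasing order, and branch on the machines to which $ALG$ assigns the first few jobs. Following the template of Theorem~1, the opening jobs are taken to be (almost) equal, so that the construction speaks to both input patterns: whenever $ALG$ spreads these opening jobs evenly over the three machines the adversary completes the instance with a run of equal medium-size jobs (the $I_1$-flavoured branch), and whenever $ALG$ stacks two or more of them on one machine the adversary completes it instead with a strictly decreasing tail (the $I_2$-flavoured branch). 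In every branch the sizes of the remaining jobs are tuned so that the whole multiset still splits into three parts each of total size $k/3$, forcing $C_{OPT}=k/3$ (equivalently $C_{OPT}=p_{max}$ when a dominant first job is used), while $ALG$ cannot realise that split.

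Concretely I would proceed as follows. First, place $J_1$ on $M_1$ without loss of generality and enumerate the few inequivalent ways the next two or three jobs can land: all on distinct machines; exactly two stacked with the third on a fresh machine; all stacked. Second, for each configuration maintain the invariant that one machine already carries more than $\tfrac{10}{27}\,k$ of load, \emph{or} that the residual budget (equal to $k$ minus the committed load, allocated among $Decr$-respecting jobs) can be spent so as to drive some machine's final load above $\tfrac{10}{27}\,k$; the stacked cases make this immediate, since a machine already at or near $\tfrac13 k$ cannot absorb a further job without overshooting, so the rest of the budget piles up on the remaining two machines. Third, in the balanced case release a block of equal medium jobs whose common size is just large enough that no machine can take its ``fair'' remainder without crossing $\tfrac{10}{27}\,k$, mirroring the way Tan and He's paired instances $\sigma_1,\sigma_2$ trap a balanced algorithm for $P_2$. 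Finally, in each branch verify that the completed multiset admits a partition into three blocks of equal sum so $C_{OPT}=k/3$, and read off $\frac{C_{ALG}}{C_{OPT}}\geq \tfrac{10}{9}=1.11$; letting $k$ grow removes the lower-order slack, just as in Eqs.~(1) and~(3).

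The main obstacle I anticipate is not a single calculation but the interplay of three competing demands on the tail: the released multiset must be non-increasing, must admit a perfectly balanced three-way partition so that $C_{OPT}$ stays at $k/3$, and must be impossible for the online algorithm to balance given its earlier commitments. Since there are strictly more placement patterns on three machines than on two, the delicate branch is the one in which $ALG$ has spread the opening jobs perfectly evenly: a least-loaded/greedy rule is essentially optimal on a pure run of equal jobs, so forcing $1.11$ there requires either making the opening jobs slightly unequal or choosing their multiplicities (as in the two-instance construction for $P_2$) so that the very evenness of the prefix becomes a liability against the tail; once the tail is fixed, checking $C_{OPT}=k/3$ is a short bin-packing verification and the ratio is arithmetic. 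A lighter alternative, more in keeping with the terse style of Theorems~1--2, is to exhibit a small fixed family of at most two explicit three-machine instances (the analogue of $\sigma_1,\sigma_2$) and argue by a brief case split on where $ALG$ places the first two or three jobs that at least one member of the family is bad for it; either route delivers the stated bound of $1.11$.
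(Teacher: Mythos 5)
Your overall plan coincides with the route the paper takes for this theorem: an adversary argument with $Sum$ fixed (the paper normalizes $Sum=27$), a first job of size $Sum/3$, and a case split on where $ALG$ places the first two or three jobs, with the ``stacked'' configurations disposed of immediately because one machine is already overloaded (the paper's Cases~1, 2.1 and 2.2 give ratios $15/9$, $14/9$ and $12/9$). Up to that point your sketch and the paper agree, and those branches are indeed easy.

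The genuine gap is in the one branch you yourself single out as the obstacle and then leave unresolved: when $ALG$ spreads $J_1,J_2,J_3$ over three distinct machines. There your proposal only promises that a tail can be ``tuned'' so that it is non-increasing, admits a perfect three-way partition forcing $C_{OPT}=k/3$, and still cannot be balanced by $ALG$; no such tail is exhibited, and this is not a routine tuning step. Concretely, if after the spread-out prefix the loads are about $k/3,\,q,\,r$ and you release a single closing job of size $k-\frac{k}{3}-q-r$, the algorithm may place it on the least-loaded machine; in the natural four-job candidates (e.g.\ $\langle 9,8,p_3,10-p_3\rangle$ with $Sum=27$, which is exactly the family $\sigma_1,\dots,\sigma_4$ the paper uses in its Sub-case~2.3) the resulting makespan is $10$, and the \emph{true} optimum of these multisets is also $10$, since $\max\{Sum/3,\,p_{max}\}=9$ is only a lower bound on $C_{OPT}$ and no partition into three parts of size $9$ exists. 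So the ratio $10/9$ does not follow in this branch from the ingredients you list; it needs either a more carefully engineered tail (more jobs, or sizes for which the balanced prefix provably cannot be completed to an optimal-looking schedule) or the explicit instance-family-plus-case-check that the paper commits to, together with an honest evaluation of $C_{OPT}$ on those instances. As written, the bound $1.11$ is asserted exactly where the proof has to be done, so the proposal does not yet constitute a proof of the theorem.
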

\begin{proof}
We follow the adversary method to prove our claim. Let $Sum=27$ is known in advance and we are given a sequence of four jobs $J_1$, $J_2$, $J_3$, $J_4$, where $p_1=9$ and $p_2+p_3+p_4=18$. We assume that at least one job must be scheduled on each machine by the end of the job sequence. We now can consider the following three cases based on the assignment of  initial two jobs. \\
\textit{Case 1.  $J_1$ and $J_2$ are assigned to the same machine $M_j$.}\\
Let us consider $M_j=M_1$ and $p_2=6$, implies, $l_1=15$. We are now left with jobs $J_3$ and $J_4$ such that $p_3+p_4=12$. Jobs $J_3$ and $J_4$ must be scheduled on two different machines other than $M_1$. By considering $p_3=p_4=6$, we obtain the updated loads  $l_2=l_3=6$. Thus, $C_{ALG}\geq 15$, while $C_{OPT}=9$. Therefore, we have\\
\hspace*{2.7cm}$\frac{C_{ALG}}{C_{OPT}}\geq \frac{15}{9}=1.66$ \hspace*{5.8cm}   (7)\\
\textit{Case 2. $J_1$ and $J_2$ are scheduled on different machines.}\\
\textit{Sub-case 2.1. $J_3$ is assigned to the machine, where $J_1$ has already been assigned.}\\
Let us consider $J_1$ has been assigned to machine $M_1$. We have jobs $J_2$ and $J_3$ such that $p_2+p_3\leq 17$. We now have ten possibilities for $(p_2, p_3)$  i.e., $(9, 8)$, $(9, 7)$, $(9, 6)$, $(9, 5)$, $(8, 8)$, $(8, 7)$, $(8, 6)$, $(8, 5)$, $(7, 7)$ and $(7, 6)$. To minimize the load on machine $M_1$, let us consider $p_2=9$ and $p_3=5$. Thus, we have the updated loads $l_1=14$, $l_2=9$ and $l_3=0$. Now job $J_4$ with $p_4=4$ must be scheduled on machine $M_3$ to update $l_3=4$. Thus, we have $C_{ALG}\geq 14$, while $C_{OPT}=9$. Therefore, we have\\
\hspace*{2.7cm}$\frac{C_{ALG}}{C_{OPT}}\geq \frac{14}{9}=1.55$ \hspace*{5.8cm}   (8)\\
\textit{Sub-case 2.2.} $J_3$ is assigned to the machine, where $J_2$ has already been assigned.\\
Let us consider job $J_2$ has been assigned to machine $M_2$. Now we have jobs $J_2$ and $J_3$ such that $p_2+p_3\leq 17$. Thus, we have three possibilities for $(p_2, p_3)$  i.e., $(8, 5)$, $(7, 6)$ and $(6, 6)$. 
Let us consider $p_2=p_3=6$. We now have the updated loads $l_1=9$, $l_2=12$ and $l_3=0$. Again $J_4$ with $p_4=6$ must be scheduled on the machine $M_3$ and incurs a load $l_3=6$. Thus, we have $C_{ALG}\geq 12$, while $C_{OPT}=9$. Therefore, we have \\
\hspace*{2.7cm}$\frac{C_{ALG}}{C_{OPT}}\geq \frac{12}{9}=1.33$ \hspace*{5.7cm}   (9)\\\\
\textit{Sub-case 2.3. $J_3$ is scheduled on a different machine.}\\
Let us consider the following four instances, where $J_i/p_i$ denote a job and its processing time.\\\\
\hspace*{2.5cm} $\sigma_1=\langle J_1/9, J_2/8, J_3/8, J_4/2 \rangle$\\
\hspace*{2.5cm} $\sigma_2=\langle J_1/9, J_2/8, J_3/7, J_4/3 \rangle$\\ 
\hspace*{2.5cm} $\sigma_3=\langle J_1/9, J_2/8, J_3/6, J_4/4 \rangle$\\
\hspace*{2.5cm} $\sigma_4=\langle J_1/9, J_2/8, J_3/5, J_4/5 \rangle$\\\\
It can be easily shown with each of the above instances that any deterministic semi-online algorithm $ALG$ achieves a competitive ratio such that \\
\hspace*{2.7cm}$\frac{C_{ALG}}{C_{OPT}}\geq \frac{10}{9}=1.11$\hspace*{5.7cm}(10)\\\\
Therefore, by Eqs. (7), (8), (9) and (10), we conclude that Theorem 6 holds true. \hfill\(\Box\)
\end{proof}
\textbf{Algorithm 3DS}\\\\
We now propose a deterministic semi-online algorithm named \textit{3DS} for the problem $P_3|Decr, Sum|C_{max}$ by considering both $I_1$ and $I_2$. Let $C_{3DS}$ be the makespan obtained by algorithm \textit{3DS}. The algorithm works as follows. 
\begin{algorithm}
\caption{3DS}
\begin{algorithmic}
\scriptsize
\STATE Initially, $l_1=l_2=l_3=0$, $Sum=\sum_{i=1}^{n}{p_i}$, and $p_{i+1}\leq p_i$ for $1\leq i< n$ \\
\STATE WHILE a new job $J_{i}$ arrives with known $Decr$ and $Sum$ DO\\
\STATE \hspace*{0.2cm} BEGIN\\
\STATE \hspace*{0.5cm} IF $l_1+p_i\leq \frac{1}{3}\cdot Sum$  \\
\STATE \hspace*{0.8cm} THEN assign job $J_i$ to machine $M_1$ \\
\STATE \hspace*{0.8cm} UPDATE $l_1=l_1+p_i$\\
\STATE \hspace*{0.5cm} ELSE \\
\STATE \hspace*{0.8cm} Select  a machine $M_j$, where $j\in\{2, 3\}$ for which the current load $l_j=\min\{l_2, l_3\}$ \\
\STATE \hspace*{0.8cm} Assign job $J_i$ to the machine $M_j$
\STATE \hspace*{0.8cm} UPDATE $l_j=l_j+p_i$\\
\STATE \hspace*{0.5cm} $i=i+1$\\
\STATE \hspace*{0.2cm} END\\
\STATE Return \hspace*{0.3cm} $C_{3DS}=\max\{l_1, l_2, l_3\}$
\end{algorithmic}
\end{algorithm}\\
Algorithm \textit{3DS} is an improved variant of Graham's list scheduling algorithm [19]. The objective of algorithm \textit{3DS} is to always keep the load of machine $M_1$ such that $l_1\leq \frac{1}{3}\cdot Sum$ and to maintain loads on machines $M_2$ and $M_3$ such that the loads  $l_2$ and $l_3$ are at most $1.5(C_{OPT})$.\\\\ 
\textbf{Upper Bound Results}\\\\
We analyze and present the performance of algorithm \textit{3DS} with respect to $I_1$ and $I_2$ as follows.\\\\
\begin{theorem}
For all job sequences  of the problem $P_3|Decr, Sum|C_{max}$ with $I_1$, where $p_i=x$, $x\geq 1$, $n\geq 4$ and $1\leq i\leq n$, we have $\frac{C_{3DS}}{C_{OPT}}\leq 1.5$.
\end{theorem}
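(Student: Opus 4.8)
The plan is to exploit the fact that under $I_1$ every job has the same size $x$, so the whole run of \textit{3DS} is determined by $n$ alone. First I would record the two numbers occurring in the ratio: with $n$ equal jobs we have $Sum=nx$, and since $n\geq 4\geq 3$ the largest job satisfies $p_{max}=x\leq nx/3$, hence $C_{OPT}=\max\{\tfrac13\cdot Sum,\ p_{max}\}=nx/3$.

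Next I would trace the execution. The algorithm puts job $k$ on $M_1$ exactly when its load would stay $\leq\tfrac13\cdot Sum$, i.e. when $kx\leq nx/3$, equivalently $k\leq n/3$; and once a job is rejected from $M_1$ the load $l_1$ never changes again, so the same test keeps failing. Thus $M_1$ collects exactly the first $\lfloor n/3\rfloor$ jobs and $l_1=\lfloor n/3\rfloor x\leq nx/3=C_{OPT}$. Each of the remaining $n-\lfloor n/3\rfloor$ jobs goes to the currently lighter of $M_2,M_3$; since the jobs are identical, the two machines receive job counts differing by at most one, so $\max\{l_2,l_3\}=\lceil (n-\lfloor n/3\rfloor)/2\rceil x$. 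Therefore
\[
\frac{C_{3DS}}{C_{OPT}}=\frac{3\,\max\{\lfloor n/3\rfloor,\ \lceil (n-\lfloor n/3\rfloor)/2\rceil\}}{n}.
\]

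It then remains to do a short case analysis on $r=n\bmod 3$, writing $n=3q+r$. If $r=0$, then $\lfloor n/3\rfloor=q$ and $\lceil(2q)/2\rceil=q$, so $C_{3DS}=C_{OPT}$ and the ratio is $1$. If $r=1$, then $\lfloor n/3\rfloor=q$ and $\lceil(2q+1)/2\rceil=q+1$, giving ratio $3(q+1)/(3q+1)$; if $r=2$, then $\lceil(2q+2)/2\rceil=q+1$, giving ratio $3(q+1)/(3q+2)$. In both of the latter cases $q\geq 1$ because $n\geq 4$, the expression is decreasing in $q$, and it is bounded above by $3(q+1)/(3q+1)$, which is maximised at $q=1$ (that is, $n=4$), where it equals $6/4=1.5$. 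Combining the three cases yields $C_{3DS}/C_{OPT}\leq 1.5$, and the argument simultaneously exhibits $n=4$ with $p_1=p_2=p_3=p_4=x$ as the instance attaining the bound.

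I expect the only mildly delicate point to be the bookkeeping of how many jobs land on $M_1$, together with the monotonicity observation guaranteeing $M_1$ is never revisited after its first rejection; once the closed-form for $C_{3DS}$ is in hand, the rest is elementary floor/ceiling arithmetic and I anticipate no real obstacle.
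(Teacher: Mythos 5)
Your proposal is correct and follows essentially the same route as the paper: trace the run of \textit{3DS} on equal-size jobs, split into cases according to $n \bmod 3$, and bound the ratio using $C_{OPT}=\tfrac{1}{3}\cdot Sum = nx/3$, with the worst case at $n=4$ giving $3(a+1)/(3a+1)\leq 1.5$. Your version merely makes explicit a few points the paper asserts without comment (that $M_1$ receives exactly $\lfloor n/3\rfloor$ jobs and is never revisited after the first rejection, and the floor/ceiling bookkeeping for $M_2,M_3$), so no substantive difference.
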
 
\begin{proof}
We consider the following three critical cases based on the number of jobs $n$.\\
\textit{Case 1.} If $n=3a$, for any $a\geq 2$.\\
Then the initial $a$ jobs are assigned to machine $M_1$ and update the load $l_1=ax$. Algorithm \textit{3DS} assigns the remaining $2a$ jobs with load $2ax$ to machines $M_2$ and $M_3$ such that exactly $a$ jobs are scheduled on each of the machines and incurs loads $l_2=l_3=ax$. Thus, we have $C_{3DS}=ax$, while $C_{OPT}=ax$. Therefore, we have \\
\hspace*{3.0cm}$\frac{C_{3DS}}{C_{OPT}}=1$ \hspace*{6.5cm} (11)  \\\\
\textit{Case 2.} If $n=3a+1$ for $a\geq 1$.\\
The initial $a$ jobs are scheduled on machine $M_1$ to incur a load $l_1=ax$. Out of the remaining $2a+1$ jobs, algorithm \textit{3DS} assigns the first $2a$ jobs on machines $M_1$ and $M_2$ such that $l_2=l_3=ax$. Now the last job $J_n$ can either be scheduled on machine $M_2$ or on machine $M_3$ to incur a load either $l_2=(a+1)x$ or $l_3=(a+1)x$. Thus, we have $C_{3DS}\leq (a+1)x$, while $C_{OPT}= \frac{(3a+1)x}{3}$. Therefore, we have\\ \hspace*{2.5cm}
$\frac{C_{3DS}}{C_{OPT}}\leq \frac{3(a+1)}{3a+1}\leq \frac{3}{2}=1.5$ \hspace*{4.5cm}(12)\\\\
\textit{Case 3.} If $n=3a+2$ for $a\geq 1$.\\
Then again the initial $a$ jobs are scheduled on machine $M_1$ to update the load $l_1=ax$. Algorithm \textit{3DS} schedules the next $2a$ jobs on machines $M_1$ and $M_2$ to update the loads $l_2=l_3=ax$ (i.e., exactly $a$ jobs are assigned to each of the machines). Out of the remaining two jobs, exactly one job is assigned to $M_2$ and $M_3$ respectively to update the loads $l_2=l_3=(a+1)x$. Thus, we have $C_{3DS}\leq (a+1)x$, while $C_{OPT}= \frac{(3k+2)x}{3}$. Therefore, we have\\ \hspace*{2.5cm}           $\frac{C_{3DS}}{C_{OPT}}\leq \frac{3(a+1)}{3a+2}\leq \frac{6}{5}=1.2$ \hspace*{4.5cm}(13)\\\\
By Eqs. (11), (12) and (13), we can conclude that Theorem 7 holds true. \hfill\(\Box\)  
\end{proof}
\begin{theorem}
For all job sequences  of the problem $P_3|Decr, Sum|C_{max}$ with $I_2$, where $n\geq 4$ and $1\leq i\leq n$, we have $\frac{C_{3DS}}{C_{OPT}}\leq 1.2$.
\end{theorem}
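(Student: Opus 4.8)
The plan is to mirror the structure of the proof of Theorem~7, arguing by cases on the residue of $n$ modulo $3$, but now tracking the actual job sizes rather than a single value $x$. Write $Sum=\sum_{i=1}^n p_i$, so $C_{OPT}=\max\{Sum/3,\ p_1\}$ since the jobs arrive in non-increasing order and $p_1=p_{max}$. First I would handle the easy case: if $p_1>\frac{1}{3}\cdot Sum$, then $C_{OPT}=p_1$, and algorithm \textit{3DS} is forced to send $J_1$ to $M_2$ or $M_3$ (whichever is empty, which initially is a tie broken to one of them); in this regime I would argue that $l_1\le\frac13\cdot Sum<p_1$ always, and that the two ``big'' machines never exceed roughly $p_1$ plus a controlled slack, keeping the ratio below $1.2$. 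So the substantive work is the generic case $p_1\le\frac13\cdot Sum$, where $C_{OPT}=Sum/3$.

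For the generic case the key quantitative lemma to establish is the one flagged in the paragraph preceding the algorithm: whenever a job is diverted from $M_1$, it goes to the currently least-loaded of $M_2,M_3$, so at the end $|l_2-l_3|\le p_k$ for some diverted job $J_k$, and more importantly the larger of $l_2,l_3$ is at most the average of $l_2+l_3$ plus half of the last-placed job's size. Since $l_1\le\frac13\cdot Sum$ by the algorithm's test, we get $l_2+l_3\ge\frac23\cdot Sum$ but also $l_2+l_3\le Sum$, and I would bound $\max\{l_2,l_3\}\le\frac{l_2+l_3}{2}+\frac{p_{last}}{2}\le\frac{Sum}{2}+\frac{p_{last}}{2}$ — this is too weak on its own, so the refinement is to use that $l_1$ is actually close to $\frac13\cdot Sum$: the job that ``overflowed'' $M_1$ has size $p\le p_1\le\frac13\cdot Sum$, and before it $l_1$ was some value $l_1'\le\frac13\cdot Sum$ with $l_1'+p>\frac13\cdot Sum$. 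Combining $l_1\ge\frac13\cdot Sum-p_{next}$-type inequalities with the strict-decrease hypothesis $I_2$ (which forces $p_n\le p_{last}-$ something, hence individual small jobs near the end), I would squeeze $\max\{l_2,l_3\}\le\frac{2}{5}\cdot Sum$, i.e. the ratio $\le\frac{6}{5}=1.2$. As in Theorem~7, I expect the tight instance to be of the form $n=3a+1$ with the overflow structured so one of $M_2,M_3$ carries $a+1$ near-equal jobs.

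The main obstacle will be controlling the imbalance between $M_2$ and $M_3$ when $p_1$ is close to $\frac13\cdot Sum$ and only a handful of large jobs exist: in that regime $M_1$ may hold just $J_1$, and $J_2,J_3,\dots$ get distributed greedily between $M_2$ and $M_3$, so the analysis reduces to the classical Graham list-scheduling bound on two machines, which by itself only gives $\frac32$, not $\frac65$. The way out is that $I_2$ plus known $Sum$ restricts how large the ``last'' job on the overloaded machine can be relative to $C_{OPT}$: since sizes strictly decrease, the job causing the final overflow is small compared to $p_1$, and $p_1\le\frac13\cdot Sum=C_{OPT}$, so the overshoot past $\frac{l_2+l_3}{2}$ is a genuinely small fraction of $C_{OPT}$. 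I would make this precise by a short case split on whether $M_1$ receives one job or at least two, exactly paralleling Sub-sub-cases~2.2.1 and~2.2.2 in the proof of Theorem~4, and in each branch exhibit the critical three- or four-job instance witnessing that $1.2$ is the worst ratio attained. Finally I would note that the bound is stated for $p_n\ge$ a small constant (as in Theorem~5) so that the diverted jobs are not vanishingly small, which is what prevents the degenerate sub-$1$ behaviour and pins the ratio at exactly $1.2$ on the extremal family.
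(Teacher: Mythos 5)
Your plan hinges on a quantitative invariant that is not established and is in fact false as stated: in the generic case $p_1\le\frac{1}{3}\cdot Sum$ you fix $C_{OPT}=\frac{1}{3}\cdot Sum$ and propose to ``squeeze'' $\max\{l_2,l_3\}\le\frac{2}{5}\cdot Sum$, justifying the squeeze by the claim that strict decrease of the sizes forces the job causing the final overflow to be small compared with $p_1$. Strictly decreasing sizes can be arbitrarily close together, so this is not true. Concretely, take $n=4$ with $p_1=10$, $p_2=9.9$, $p_3=9.8$, $p_4=9.7$, so $Sum=39.4$ and $p_1\le\frac{1}{3}\cdot Sum\approx 13.1$. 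Algorithm \textit{3DS} puts $J_1$ on $M_1$ and diverts the rest, ending with $\max\{l_2,l_3\}=19.5>\frac{2}{5}\cdot Sum\approx 15.8$, so your key lemma fails; your bound chain would then report a ratio of about $19.5/13.1\approx 1.49$. The instance itself still satisfies the theorem only because the true optimum is $19.5$, far above $\frac{1}{3}\cdot Sum$ (with four jobs on three machines two jobs must share a machine, so $C_{OPT}\ge p_3+p_4$). In other words, exactly in the regime you flag as the obstacle (few large jobs, $M_1$ holding only $J_1$), the escape is not that the overflow job is small but that $C_{OPT}$ exceeds $\frac{1}{3}\cdot Sum$, and your proposal never develops the LPT-style lower bounds on $C_{OPT}$ (such as $C_{OPT}\ge p_3+p_4$, or more generally pairing bounds on the machine receiving the last job) that would be needed to close this branch. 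A secondary issue: you smuggle in a hypothesis $p_n\ge$ a small constant ``as in Theorem 5,'' but the statement of this theorem carries no such restriction, so even a repaired argument along your lines would prove a weaker claim.

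For comparison, the paper argues quite differently: it assumes a minimal counterexample $\sigma$ (minimal number of jobs $t$), reduces to the case where the last job $J_t$ determines the makespan, asserts load invariants for the schedule of the first $t-1$ jobs ($l_1\le\frac{1}{3}\cdot Sum$, $l_2\le\frac{2}{5}\cdot Sum$, $\frac{4}{15}\cdot Sum\le l_3<\frac{2}{5}\cdot Sum$), bounds $p_t$ and derives $p_{max}>5$, and then refutes the existence of the counterexample by examining a smallest candidate instance. So your direct case analysis is a genuinely different (and in principle cleaner) route, but as written it does not go through without the missing optimum lower bounds and without removing the extra assumption on $p_n$.
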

\begin{proof}
We prove the theorem by method of contradiction. Let us assume that the input job sequence $\sigma = \langle J_1, J_2, \ldots, J_t \rangle$ is the smallest instance of the problem with respect to number of jobs that contradicts Theorem 8. Thus $t$ is minimal. Let $C_{3DS}(\sigma)$ and $C_{OPT}(\sigma)$ be the makespans incurred by algorithm \textit{3DS} and $OPT$ respectively for the sequence $\sigma$. Suppose $J_r$ is a job in $\sigma$ that finishes last in the schedule, generated by algorithm \textit{3DS} and $r< t$. We now have an instance $\sigma_1 = \langle J_1, J_2, \ldots, J_r \rangle$ such that $C_{3DS}(\sigma_1)=C_{3DS}(\sigma)$, while $C_{OPT}(\sigma_1)\leq C_{OPT}(\sigma)$. \\\\
This implies,  $\frac{C_{3DS}\sigma_1}{C_{OPT}\sigma_1} \geq \frac{C_{3DS}(\sigma)}{C_{OPT}(\sigma)} > 1.2$.\\\\
Thus the instance $\sigma_1$ constitutes the smallest counterexample to Theorem 8, which contradicts our assumption on the minimality of $t$. Therefore the instance $\sigma$ with $t$ jobs forms the smallest counterexample such that $\frac{C_{3DS}(\sigma)}{C_{OPT}(\sigma)}> 1.2$.\\
Implies, for any other instance $\sigma_2$ with $t-1$ jobs, we have $\frac{C_{3DS}(\sigma_2)}{C_{OPT}(\sigma_2)}\leq 1.2$, where $t-1\geq 4 $. Algorithm \textit{3DS} always maintains the loads on the machines $M_1$, $M_2$ and $M_3$ such that $l_1\leq l_2$ and $l_3\leq l_2$. \\
Let us consider the instance $\sigma_2 = \langle J_1, J_2, \ldots, J_{t-1} \rangle$. After scheduling all jobs of $\sigma_2$, algorithm \textit{3DS} updates the loads of $M_1$, $M_2$ and $M_3$ such that $l_1\leq \frac{1}{3}\cdot Sum$, $l_2\leq \frac{2}{5}\cdot Sum$ and $\frac{4}{15}\cdot Sum \leq l_3 < \frac{2}{5}\cdot Sum$. It is clear that assignment of the job $J_t$ makes $\frac{C_{3DS}(\sigma)}{C_{OPT}(\sigma)}> 1.2$. We know that before scheduling of job $J_t$, we have $l_3< l_2$ and $l_3\leq \frac{2}{5}\cdot Sum-1$. Hence job $J_t$ is scheduled on machine $M_3$ and the updated load $l_3> \frac{2}{5}\cdot Sum$. Without loss of generality, we can bound the processing time of job $J_t$ as $2\leq p_t\leq p_{max}-4$. This implies,\\\\
\hspace*{1.2cm} $C_{3DS}(\sigma)=l_3+p_{max}-4> \frac{2}{5}\cdot Sum$.\\\hspace*{1.2cm} $\frac{2}{5}\cdot Sum+p_{max}-5> \frac{2}{5}\cdot Sum$ \\\\
implies, \hspace*{0.7cm}$p_{max}> 5$. \\\\
To show the correctness of our assumptions, let us consider the smallest instance $\sigma_3 = \langle J_1/6, J_2/5, J_3/4, J_4/3, J_5/2 \rangle$ based on our assumptions on $p_{max}$, $p_t$ and $\sigma$. After scheduling all jobs of $\sigma_3$, algorithm $3DS$ updates the loads such that $l_1=6$, $l_2=l_3=7$. Hence $C_{3DS}(\sigma_3)\leq 7$, while $C_{OPT}(\sigma_3)= \frac{20}{3}$. Therefore $\frac{C_{3DS}(\sigma_3)}{C_{OPT}(\sigma_3)}\leq \frac{21}{20}< 1.2$. The instance $\sigma_3$ is a counterexample to our assumptions on $p_{max}$, $p_t$ and $\sigma$. Hence there does not exist any counterexample to Theorem 8.  Therefore the Theorem holds true. \hfill\(\Box\)
\end{proof}
\textbf{Algorithm Improved 3DS}\\\\
For the problem $P_3|Decr, Sum|C_{max}$ with $I_1$, the algorithm \textit{3DS} is best possible in the sense that no semi-online algorithm can overcome the instance $\sigma = \langle J_1/1, J_2/1, J_3/1, J_4/1 \rangle$ to obtain a better competitive ratio than $1.5$. However, the upper bound $1.2$ for the problem $P_3|Decr, Sum|C_{max}$ with $I_2$ can be improved to $1.11$.  We now propose a more sophisticated, yet improved deterministic semi-online algorithm than \textit{3DS} for the problem $P_3|Decr, Sum|C_{max}$ with $I_2$. We name the algorithm as \textit{Improved 3DS (I3DS)} and it works as follows.
\begin{algorithm}
\caption{I3DS}
\begin{algorithmic}
\scriptsize
\STATE Initially, $l_1=l_2=l_3=0$, $Sum=\sum_{i=1}^{n}{p_i}=27$, and $p_{i+1}< p_i$ for $1\leq i< n$ \\
\STATE WHILE a new job $J_{i}$ is given with known $Decr$ and $Sum$ DO\\
\STATE \hspace*{0.2cm} BEGIN\\
\STATE \hspace*{0.5cm} IF $l_1+p_i\leq \frac{1}{3}\cdot Sum$  \\
\STATE \hspace*{0.8cm} THEN assign job $J_i$ to machine $M_1$ \\
\STATE \hspace*{0.8cm} UPDATE $l_1=l_1+p_i$\\
\STATE \hspace*{0.5cm} ELSE IF $l_1+p_i\leq \frac{10}{27}\cdot Sum$  \\
\STATE \hspace*{0.8cm}
THEN assign job $J_i$ to machine $M_2$ \\
\STATE \hspace*{0.8cm} UPDATE $l_2=l_2+p_i$\\
\STATE \hspace*{0.5cm}
ELSE\\
\STATE \hspace*{0.8cm}
assign job $J_i$ to machine $M_3$ \\
\STATE \hspace*{0.8cm} UPDATE $l_3=l_3+p_i$\\
\STATE \hspace*{0.5cm} $i=i+1$\\
\STATE \hspace*{0.2cm} END\\
\STATE Return \hspace*{0.3cm} $C_{I3DS}=\max\{l_1, l_2, l_3\}$
\end{algorithmic}
\end{algorithm}\\
The aim of algorithm \textit{I3DS} is to always maintain loads on machines $M_1$, $M_2$ and $M_3$ such that $l_1\leq \frac{9}{27}\cdot Sum$, $l_2\leq \frac{10}{27}\cdot Sum$ and $\frac{8}{27}\cdot Sum \leq l_3\leq  \frac{10}{27}\cdot Sum$. For a better understanding of the competitive analysis of algorithm \textit{I3DS}, let us normalize the processing times of the jobs in such a manner that $Sum=27$.
\subsubsection{Upper Bound Result}
\begin{theorem}
For all job sequences  of the problem $P_3|Decr, Sum|C_{max}$ with $I_2$, where $n\geq 4$ and $p_n\geq 3$, we have $\frac{C_{I3DS}}{C_{OPT}}\leq 1.11$.  
\end{theorem}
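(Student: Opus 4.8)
The plan is to normalize the instance so that $Sum=27$, so that $C_{OPT}\geq\max\{9,\,p_1\}$, and to reduce the claim to the single inequality $C_{I3DS}\leq 10$; indeed $\frac{C_{I3DS}}{C_{OPT}}\leq\frac{10}{9}<1.12$ then follows, and $\frac{10}{9}=1.11\ldots$ . Since $C_{I3DS}=\max\{l_1,l_2,l_3\}$ at termination, it suffices to establish the load invariants $l_1\leq 9$, $l_2\leq 10$, $l_3\leq 10$ (the lower bound $l_3\geq 8$ quoted before the theorem expresses the intended load balance but is not needed for the ratio). The bound $l_1\leq 9$ is immediate from the algorithm: a job is placed on $M_1$ only when $l_1+p_i\leq\frac13\cdot Sum=9$, so the load of $M_1$ never exceeds $9$ at any step. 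The real content is therefore the bound on $l_2$ and $l_3$.

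For that I would reuse the minimal-counterexample template already employed for Theorems 8 and 9. Assume $\sigma=\langle J_1,\ldots,J_t\rangle$ with $t\geq 4$, $p_t\geq 3$ is a shortest sequence for which $C_{I3DS}(\sigma)/C_{OPT}(\sigma)>\frac{10}{9}$. Let $J_r$ be the job that finishes last under \textit{I3DS}; truncating $\sigma$ after $J_r$ leaves $C_{I3DS}$ unchanged and does not increase $C_{OPT}$, so minimality of $t$ forces $r=t$, i.e. the offending job is the last one. If $J_t$ were placed on $M_1$ then $C_{I3DS}(\sigma)=l_1\leq 9\leq C_{OPT}(\sigma)$, a contradiction, so $J_t$ is placed on $M_2$ or $M_3$. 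Combining the assignment rule evaluated at the arrival of $J_t$ with the strict decrease hypothesis $I_2$ and the tail bound $p_t\geq 3$ — and using that $l_1$ is monotone and bounded by $9$, that every job diverted off $M_1$ has size at most $\frac{10}{27}\cdot Sum$ minus the current $l_1$, and that $Sum=27$ forces $l_2+l_3=27-l_1$ — I would cut the admissible configurations just before $J_t$ down to a short list of small ``critical'' instances (of the type $\langle 6,5,4,3,\ldots\rangle$ that appears in the proof of Theorem 8). Verifying $C_{I3DS}\leq 10$ on each of these directly contradicts the choice of $\sigma$, proving the invariants.

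To seed the base configurations I would split on the size of $p_1$ exactly as in the proof of Theorem 5: (i) if $p_1>\frac{10}{27}\cdot Sum$, then $J_1$ goes straight to $M_3$ and $C_{OPT}=p_1$, and one argues the remaining mass, which is less than $\frac{17}{27}\cdot Sum$, is spread over $M_1$ and $M_2$ without exceeding $p_1$; (ii) if $\frac{9}{27}\cdot Sum<p_1\leq\frac{10}{27}\cdot Sum$, then $J_1$ lands on $M_2$ and one tracks how the decreasing tail fills the three thresholds; (iii) if $p_1\leq\frac{9}{27}\cdot Sum$, then $J_1$ starts on $M_1$ and the minimal-counterexample step above applies. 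The main obstacle I anticipate is case (i) together with the bookkeeping for $M_3$: because the $M_2$-versus-$M_3$ decision in \textit{I3DS} is driven by $l_1$ and by the arrival order rather than by $l_2,l_3$ directly, one must argue carefully that not too much aggregate size can be routed onto $M_3$, and it is precisely the hypotheses $t\geq 4$, $p_t\geq 3$ and $I_2$ that make this controllable and keep the list of configurations finite and checkable under the $Sum=27$ normalization.
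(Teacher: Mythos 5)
Your plan correctly sets up the normalization ($Sum=27$, $C_{OPT}\geq\max\{9,p_1\}$), the target ratio $\frac{10}{9}$, and the easy bound $l_1\leq 9$, and your case split on $p_1$ mirrors the paper's. But the entire content of the theorem is the claim $l_2\leq\frac{10}{27}\cdot Sum$ and $l_3\leq\frac{10}{27}\cdot Sum$ (outside the case $p_1>\frac{10}{27}\cdot Sum$), and this is exactly the step you defer to ``cut the admissible configurations just before $J_t$ down to a short list of small critical instances and verify each.'' No such list is produced, and no argument is given that the list is finite or exhaustive; you yourself identify the obstruction (the $M_2$-versus-$M_3$ rule in \textit{I3DS} tests $l_1+p_i$, not $l_2$ or $l_3$, so nothing in the algorithm directly caps $l_2$ or $l_3$) but you do not resolve it. The paper, by contrast, carries out an explicit (if terse) bookkeeping argument: after the case split on $p_1$ it tracks which machine receives $J_2$ and the last job assigned to $M_1$, bounds $p_n$ in terms of $p_{max}$, and derives $l_3\leq 10$ directly (its Sub-cases 2.1, 2.2.1, 2.2.2), rather than appealing to a minimal counterexample.

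Moreover, the minimal-counterexample template you import from Theorem 8 does not transfer soundly to \textit{I3DS}: the truncation step assumes $C_{I3DS}(\sigma_1)=C_{I3DS}(\sigma)$ when the jobs after the last-finishing job $J_r$ are deleted, but \textit{I3DS}'s thresholds $\frac{1}{3}\cdot Sum$ and $\frac{10}{27}\cdot Sum$ are functions of $Sum$, which changes under truncation (and under re-normalization to $Sum=27$); hence the algorithm may schedule the prefix entirely differently, and the truncated instance may also violate $n\geq 4$. So the reduction to ``the offending job is the last one'' is not justified, and the subsequent enumeration it was meant to enable is missing. As it stands the proposal is a proof outline whose two load-bounding steps — the only nontrivial steps — are not established.
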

\begin{proof}
We know that $Sum=27$ and initially $l_1=l_2=l_3=0$. We now have $C_{OPT}\geq 9$.
We can consider the following critical cases based on the size of job $J_1$\\
\textit{Case 1. If $p_1> \frac{10}{27}\cdot Sum$.}\\
We know that $p_{max}=p_1$. Clearly, $l_1+p_1> \frac{10}{27}\cdot Sum$ and $l_2+p_1> \frac{10}{27}\cdot Sum$. Let us consider $X=\sum_{i=2}^{n}{p_i}$. Clearly, $X< \frac{17}{27}\cdot Sum$. Algorithm \textit{I3DS} now assigns $J_1$ to machine $M_3$ and schedules the remaining jobs to machines $M_1$ and $M_2$ such that the updated loads $l_1\leq \frac{7}{27}\cdot Sum$ and $l_2\leq \frac{10}{27}\cdot Sum$, while $l_3=p_{max}> \frac{10}{27}\cdot Sum$. Hence, $C_{I3DS}=p_{max}$, while $C_{OPT}=p_{max}$. \\\\
\textit{Case 2. If $\frac{1}{3}\cdot Sum< p_1\leq \frac{10}{27}\cdot Sum $.}\\
Then $l_1+p_1> \frac{1}{3}\cdot Sum$ and $l_2+p_1\leq \frac{10}{27}\cdot Sum$. This implies, job $J_1$ is assigned to machine $M_2$ and the updated load $l_2\leq \frac{10}{27}\cdot Sum$. Now $\sum_{i=2}^{n}{p_i}\leq \frac{17}{27}\cdot Sum$. \\
\textit{Case 2(a). No other job than $J_1$ is assigned to machine $M_2$.}\\
We now can consider the following critical cases based on the size of the second job $J_2$\\
\textit{Sub-case 2.1. If $p_2=\frac{1}{3}\cdot Sum$.}\\
Then the updated load $l_1=\frac{1}{3}\cdot Sum$ and the remaining jobs are scheduled on machine $M_3$ such that the updated load $\frac{8}{27}\leq l_3< \frac{10}{27}\cdot Sum$. Hence $C_{I3DS}\leq \frac{10}{27}\cdot Sum$, while $C_{opt}\geq \frac{1}{3}\cdot Sum$. Therefore \\
\hspace*{2.2cm} $\frac{C_{I3DS}}{C_{OPT}}\leq \frac{10}{9}=1.11$ \hspace*{5.9cm}(14)\\
\textit{Sub-case 2.2. If $p_1<\frac{1}{3}\cdot Sum$.}\\
Then job $J_2$ is assigned to machine $M_1$. Now the updated load $l_1< \frac{1}{3}\cdot Sum$. We now can consider the following cases based on the assignment of jobs on machine $M_1$ as follows.\\
\textit{Sub-sub-case 2.2.1. No other job then $J_2$ is scheduled on machine $M_1$.}\\
Clearly, \hspace*{1.2cm} $l_1\leq \frac{1}{3}\cdot Sum-1$. \hspace*{5.9cm}(15)\\
Even the assignment of the last job $J_n$ to machine $M_1$ makes the updated load $l_1> \frac{1}{3}\cdot Sum$.\\
Before the assignment of job $J_n$ to machine $M_3$, we have $l_3\leq \frac{8}{27}\cdot Sum-1$. After the scheduling of job $J_n$ on $M_3$, we have the updated load $l_3>\frac{8}{27}\cdot Sum$ as $l_2\leq \frac{10}{27}\cdot Sum$ and $l_1< \frac{1}{3}\cdot Sum$. W.l.o.g, we can bound the processing time of the last job $J_n$ such that $2\leq p_n\leq p_{max}-6$. We now have $l_3\leq \frac{8}{27}\cdot Sum-1+p_{max}-6\leq \frac{8}{27}\cdot Sum+C_{OPT}-7\leq \frac{8}{27}\cdot Sum+\frac{1}{3}\cdot Sum-7\leq \frac{17}{27}\cdot Sum-7$, implies\\
\hspace*{2.7cm} $l_3\leq 10$ \hspace*{7.3cm} (16)\\
Hence $C_{I3DS}\leq 10$, while $C_{OPT}\geq 9$. Therefore\\
\hspace*{2.2cm} $\frac{C_{I3DS}}{C_{OPT}}\leq \frac{10}{9}=1.11$ \hspace*{5.9cm}(17)\\\\
\textit{Sub-sub-case 2.2.2. At least one job other than $J_2$ is assigned to machine $M_1$.}\\
Let $J_k$ be the latest job, which has been assigned to $M_1$. Let $l'_{1}$ be the load of $M_1$ just before the assignment of $J_k$. By Eq. (15), we have\\ 
\hspace*{2.2cm} $l'_{1}+p_n\leq \frac{1}{3}\cdot Sum> l_1$ \hspace*{5.1cm} (18)\\
If $J_n=J_k$, then by Eq. (18), we have $l'_{1}+p_n< \frac{1}{3}\cdot Sum$ or $l'_{1}+p_n= \frac{1}{3}\cdot Sum$. If $l'_{1}+p_n= \frac{1}{3}\cdot Sum$, then the updated load $l_3$ is such that $\frac{8}{27}\leq l_3< \frac{10}{27}\cdot Sum$, while $l_2\leq \frac{10}{27}\cdot Sum$. Hence $C_{I3DS}\leq \frac{10}{27}\cdot Sum$, while $C_{OPT}\geq \frac{1}{3}\cdot Sum$. Therefore $\frac{C_{I3DS}}{C_{OPT}}\leq \frac{10}{9}=1.11$.\\
If $l'_{1}+p_n< \frac{1}{3}\cdot Sum$. Let $l'_3$ be the load of machine $M_3$ after the scheduling of $J_n$ to machine $M_1$. By Eq. (16), we have $l'_3< l_3$ and $l'_3< \frac{10}{27}\cdot Sum$. Hence $C_{I3DS}=l_2\leq \frac{10}{27}\cdot Sum$, while $C_{OPT}\geq \frac{1}{3}\cdot Sum$. Therefore $\frac{C_{I3DS}}{C_{OPT}}\leq \frac{10}{9}=1.11$. \hfill\(\Box\) 
\end{proof}
\section{Conclusion and Future Work}
In this paper we investigated the non-preemptive semi-online scheduling problem in two and three identical parallel machines settings with makespan minimization objective. We studied the problem by considering a pair-wise combination of two types of \textit{EPI} on the future jobs, where we know \textit{jobs arrive in order of non-increasing sizes ($Decr$}) and the \textit{total sum of sizes of the jobs ($Sum$)} beforehand. In particular we studied the problem with respect to two practically significant input job sequence patterns, i.e., $I_1$ and $I_2$ based on $Decr$. The pattern $I_1$ represents a job sequence, where all jobs are of equal size and $I_2$ represents a job sequence, where jobs arrive in order of decreasing sizes. We showed that prior knowledge of multiple \textit{EPI} is extremely helpful in designing improved competitive semi-online scheduling algorithms. We proved the lower bounds $1.33$ and $1.11$ on the competitive ratio for the problem in two and three identical machines settings respectively by the adversary method. We proved a lower bound $1.04$ for the problem by considering only $I_2$ in two identical machines setting. Our method can further be extended to achieve a lower bound of $\frac{2m}{m+1}$ on the competitive ratio for the problem in $m$ identical machines settings, where $m\geq 2$.  We  proposed four new variants of the deterministic semi-online scheduling algorithms and achieved improved competitive ratios.  The summary of the upper bound results on the competitive ratio for our proposed semi-online algorithms is presented in Table \ref{tab:Summary of Our Results}.  
\begin{table}[!htbp]
\centering
\caption{Summary of Our Upper Bound Results for the problem $P|Decr, Sum|C_{max}$}
\begin{tabular} {cp{1.5cm}p{1.5cm}p{3cm}}
\hline
\textbf{Machines} & \textbf{$I_1$} &\textbf{$I_2$} & \textbf{$I_2$} \\
\hline
Two  & 1.33 &  1.33 & 1.16 \\
Three  &  1.5 &  1.2 & 1.11\\
\hline
\end{tabular}
\label{tab:Summary of Our Results}
\end{table} \\
\textbf{Future Work.} It is non-trivial and interesting to address the following research challenges as a part of future work.
\begin{itemize}
\item Improvement of the upper bound $1.16$ for the problem $P_2|Decr, Sum|C_{max}$ with $I_2$ for $n\geq 3$ and $p_i\geq 1$, $\forall i$.
\item Minimization of the upper bound $1.11$ for the problem $P_3|Decr, Sum|C_{max}$ with $I_2$ for $n\geq 4$ and $p_i\geq 1$, $\forall i$.
\item Study of the problem $P_m|Decr, Sum|C_{max}$ with respect to input patterns $I_1$ and $I_2$ for $m\geq 4$.
\item Exploration of practically significant new input patterns based on the  information on $Decr$ or job's release time.
\item As we know that \textit{EPI} is an application dependent parameter and it also affects the performance of a semi-online scheduling algorithm, it is interesting and challenging to explore practically and theoretically significant new \textit{EPI}.
\end{itemize} 
%

%
%
%
%

\begin{thebibliography}{00}
\bibitem{Lenstra:77}
Lenstra J K, Rinnooy Kan A H G, Brucker P. Complexity of machine scheduling problems. \textit{Annals of Discrete Mathematics}, 1977, 1:343-362.
\bibitem{Graham:69}
Graham R L. Bounds on multiprocessor timing anomalies. \textit{SIAM Journal on Applied Mathematics}, 1969,  17(2):416-429.
\bibitem{Beaumont:20}
Beaumont O, Canon L C, Eyraud-Dubois L, Lucarelli G, Marchal L, Mommessin C, Simon B, Trystram D. Scheduling on two types of resources: A Survey. \textit{ACM Computing Surveys (CSUR)}, 2020, 53(3):1-36.
\bibitem{Kellerer:97}
Kellerer H, Kotov V, Speranza M G, Tuza T. Semi-online algorithms for the partition problem. \textit{Operations Research Letters}, 1997, 21:235-242.
\bibitem{Seiden:00}
Seiden S, Sgall J, Woeginger G. Semi-online scheduling with decreasing job sizes. \textit{Operations Research Letters}, 2000, 27:215-221.
\bibitem{Tarjan:85}
Tarjan R E, Sleator D D. Amortized computational complexity. \textit{SIAM Journal on Algebraic and Discrete Methods}, 1985, 6(2):306-318.
\bibitem{Tan:02}
Tan Z, He Y. Semi-online problems on two identical machines with combined partial information. \textit{Operations Research Letters}, 2002, 30:408-414.
\bibitem{Angelelli:00}
Angelelli E. Semi-online scheduling on two parallel processors with known sum and lower bound on the size of the tasks. \textit{Central European Journal of Operations Research}, 2000, 8:285-295.
\bibitem{Angelelli:03}
Angelelli E, Speranza M G, Tuza Zs. Semi-online scheduling on two parallel processors with an upper bound on items. \textit{Algorithmica}, 2003, 37:243-262. 
\bibitem{Angelelli:06}
Angelelli E, Speranza M G, Tuza T. New bounds and algorithms for online scheduling: two identical processors, known sum and upper bound on the task. \textit{Discrete Mathematics and Theoretical Computer Science}, 2006, 8(1):1-16.
\bibitem{Cao:12}
Cao Q, Cheng T C E, Wan G, Li Y. Several semi-online scheduling problems on two identical machines with combined information. \textit{Theoretical Computer Science}, 2012, 457(26):35-44.
\bibitem{Cao:16}
Cao Q, Wan G. Semi-online scheduling with combined information on two identical machines in parallel. \textit{Journal of Combinatorial Optimization}, 2016, 31(2):686-695.
\bibitem{He:05}
He Y, Dosa G. Semi-online scheduling jobs with tightly grouped processing times on three identical machines. \textit{Discrete Applied Mathematics}, 2005, 150:140-159.
\bibitem{Angelelli:07}
Angelelli E, Speranza M G, Tuza Zs. Semi- online scheduling on three processors with known sum of the tasks. \textit{Journal of Scheduling}, 2007, 10:263-269.
\bibitem{Hua:06}
Hua R, Hu J, Lu L. A semi-online algorithm for parallel machine scheduling on three machines. \textit{Journal of Industrial Engineering and, Engineering Management}, 2006, 20.
\bibitem{Wu:07}
Wu Y, Tan Z, Yang Q. Optimal semi-online scheduling algorithms on a small number of machines. \textit{Lecture Notes in Computer Science (LNCS)}, Springer, 2007, 4614:504-515.
\bibitem{Cheng:12}
Cheng T C E, Kellerer H, Kotov V. Algorithms better than LPT for semi-online scheduling with decreasing processing times. \textit{Operations Research Letters}, 2012, 40:349-352.
\bibitem{Epstein:18}
Epstein L. A Survey on makespan minimization in semi-online environments. \textit{Journal of Scheduling}, 2018, 21(3):269-284.
\bibitem{Graham:66}
Graham R L. Bounds for certain multiprocessor anomalies. \textit{Bell System Technical Journal}, 1966, 45(1):1563--1581.
\end{thebibliography}
\end{document}